\newcounter{saveequation}
\newenvironment{alignLetter}{
  \setcounter{saveequation}{\value{equation}}
  \setcounter{equation}{0}
  \align
}{
  \endalign
  \setcounter{equation}{\value{saveequation}}
}
\title{Trainability issues in quantum policy gradients}
\author[1,2,3]{André Sequeira}
\author[1,2,3]{Luis Paulo Santos}
\author[1,2,3]{Luis Soares Barbosa}
\affil[1]{Department of Informatics, University of Minho, Braga, Portugal} 
\affil[2]{High Assurance Software Laboratory, INESC TEC, Braga, Portugal}
\affil[3]{Quantum Linear-optical computation group, International Nanotechnology Laboratory, Braga, Portugal}
\begin{document}

\maketitle

\begin{abstract}
  This research explores the trainability of Parameterized Quantum Circuit-based policies in Reinforcement Learning, an area that has recently seen a surge in empirical exploration. While some studies suggest improved sample complexity using quantum gradient estimation, the efficient trainability of these policies remains an open question. Our findings reveal significant challenges, including standard Barren Plateaus with exponentially small gradients and gradient explosion. These phenomena depend on the type of basis-state partitioning and the mapping of these partitions onto actions. For a polynomial number of actions, a trainable window can be ensured with a polynomial number of measurements if a contiguous-like partitioning of basis-states is employed. These results are empirically validated in a multi-armed bandit environment.
\end{abstract}


\section{Introduction}\label{sec: introduction}
Variational Quantum Algorithms (VQAs), emerging as a cornerstone in the Noisy Intermediate Scale Quantum (NISQ) era, present a novel approach to overcoming the limitations inherent in quantum computing, such as restricted qubit availability and noise related constraints on circuit depth. Initially proposed as universal computation models \citep{biamonte_universal_2021}, VQAs operate through a synergy of quantum and classical mechanisms. They utilize a Parameterized Quantum Circuit (PQC) where the parameters are fine-tuned via a classical optimization routine to achieve the global optimum of a specified objective function \citep{cerezo_variational_2021}. Despite the theoretical allure of VQAs, their practical efficiency is often hampered by the so-called barren plateau (BP) phenomenon, a critical challenge in quantum optimization \citep{mcclean_barren_2018}. This phenomenon, characterized by the exponential suppression of the gradients' magnitude with an increasing number of qubits, requires an exponentially large number of measurements to allow the algorithm to effectively navigate through the optimization landscape.

The BP phenomenon pose a significant hurdle, not only in gradient-based but also in gradient-free optimization approaches, where cost concentration emerges as a parallel challenge \cite{arrasmith_effect_2021}. Understanding and mitigating the occurrence of BPs in specific VQAs is thus vital for harnessing any potential quantum advantage. Several factors contribute to the emergence of BPs, including deep and random quantum circuits \cite{mcclean_barren_2018}, PQCs adhering to a volume law in entanglement entropy \citep{leone_practical_2022} etc. The work of Cerezo et al. \citep{cerezo_cost_2021} particularly highlights the dependence of the BP phenomenon on the locality of the cost function, showing that local losses measured on a logarithmic number of qubits can retain trainability in shallow circuits \citep{rudolphTrainabilityBarriersOpportunities2023}.

Further complicating the picture, conventional machine learning cost functions like the mean squared error, negative log likelihood, and KL-divergence have been shown to lead to BPs \citep{thanasilp_subtleties_2021}. BPs are typically characterized by the scaling of the variance of partial derivatives of the cost function, which diminishes exponentially with the number of qubits \citep{mcclean_barren_2018}. This scaling results in gradients increasingly concentrating around zero, making optimization exceedingly difficult. Another approach to characterize a BP is through the study of cost concentration \citep{arrasmith_effect_2021}, where cost differences between randomly selected points in the landscape show an exponential concentration with increasing qubits. In addition, the Fisher Information Matrix (FIM) spectrum, as explored in the work of Abbas et al. \citep{abbas_power_2021}, offers valuable insights into the flatness of the loss landscape in the presence of BPs, with the eigenvalues of the FIM becoming exponentially small as the number of qubits increases.

Recent studies have expanded the application of VQAs to Reinforcement Learning (RL), showing promising results \citep{skolik_quantum_2022, jerbi_variational_2021, sequeira_policy_2023}. Notably, the work of Jerbi et al. \citep{jerbi_quantum_2022} demonstrated a quadratic optimization improvement in policy-based RL agents using PQC-based policies over classical agents. However, the trainability of these quantum policies, particularly in the face of BPs, remains an open question. In this context, our study aims to provide a deeper understanding of the trainability issues associated with PQC-based policies in RL, focusing on cost-function dependent BPs and their implications. We explore the challenges faced by specific variations of previously proposed \textit{Raw Policies}\cite{jerbi_variational_2021,meyer_quantum_2023}, and investigate their performance under various conditions. Our findings contribute to the ongoing research on optimizing PQC-based agents in quantum RL, addressing critical questions on the interplay between policy types, number of qubits, action-space size, and the presence of BPs and other trainability issues such as exploding gradients. This research not only advances our understanding of quantum RL but also sets the stage for future investigations into other types of PQC-based policies \cite{sequeira_policy_2023,jerbi_variational_2021}, thereby unlocking the full potential of quantum computing in machine learning applications.

\subsection*{Related work}\label{subsec: related work}
Recent advancements have been documented concerning the application of VQAs to RL. There has been considerable empirical evidence supporting the efficacy of VQAs in diverse benchmark environments, encompassing both value-based \citep{chen_variational_2020, skolik_quantum_2022} and policy-based \citep{jerbi_variational_2021, sequeira_policy_2023} RL paradigms. A significant contribution in this field was made by Jerbi et al. \citep{jerbi_quantum_2022}, who demonstrated a quadratic improvement in gradient estimation for optimizing policy-based RL agents using PQC-based policies compared to purely classical agents.
In another notable work, Cherrat et al. \citep{cherrat_quantum_2023} introduced quantum neural network architectures featuring orthogonal and compound layers for policy and value functions, notably devoid of BPs in the context of financial hedging. At the same time, Meyer et al. \citep{meyer_quantum_2023} posited that a global parity-based policy could provide more information to the agent and a more conducive optimization landscape. This proposition challenges the previously held belief that global measurements lead to flatter landscapes, implying further issues on trainability. The emerging divergence in these findings entails the need for further research to fully understand the impact of the BP phenomenon within RL, especially in the context of generalized PQC-based policies, as it may significantly influence optimization efficiency provided by gradient estimation.

This investigation centers on analyzing \textit{cost-function dependent barren plateaus} within the framework of policy-based RL, utilizing both local and global projector-based observables in conjunction with PQC-based policies. The primary objective of this study is to delineate variance limits for the gradient of the REINFORCE policy-dependent objective function \cite{williams_simple_1992}, especially under the assumption of a PQC-based policy. We re-examine two previously introduced policies, redefined here for enhanced clarity: 1) The \textit{Contiguous-like Born policy}, as referenced in \cite{jerbi_variational_2021}, derived from categorizing basis states into a contiguous set proportional to the action-space size, and 2) The \textit{Parity-like Born policy}, detailed in \cite{meyer_quantum_2023}, formulated through a recursive parity function applied to measured basis states.

\subsection*{Contributions}\label{subsec: contributions}
Our findings highlight that both contiguous and parity-like Born policies can potentially face extreme challenges in terms of trainability. On one side, the policy might encounter standard BPs characterized by exponentially vanishing gradients, while on the other, it may face issues of gradient explosion. These phenomena are heavily influenced by the locality of the observables employed that depend on the action-space size. For $n$ qubit policies estimated through $\mathcal{O}(\text{poly}(n))$ measurements, the \textit{contiguous-like Born policy} exhibits a trainable region at logarithmic depth $\mathcal{O}(\text{log}(n))$, assuming the action-space is of $\mathcal{O}(n)$ size. For a $\mathcal{O}(\text{poly}(n)$ number of actions, the policy enters a transition region where the locality of the observables increase but it is still possible to train under polynomially large number of measurements. Conversely, under the same conditions, the \textit{Parity-like Born policy} is untrainable, suffering from a BP. 
Beyond polynomially-sized action spaces, no policy can be trained using a polynomial number of measurements since the probability of measuring basis states becomes exponentially suppressed with the number of qubits. In such a scenario, the gradient behavior shifts towards exploding gradients due to the exponentially small probabilities.

The trainability of PQC-based policies was further analyzed by inspecting the FIM spectrum. It was observed that, under polynomially sized actions spaces, the FIM spectrum indeed reveals a BP for the \textit{Parity-like Born policy}, as FIM entries shrink exponentially with increasing qubits, resulting in a spectrum highly concentrated at zero, therefore characterizing a flat landscape. Outside polynomial action spaces, the FIM spectrum becomes less informative about BPs due to the exponentially small probabilities that induce large FIM entries, causing a shift in the spectrum with more eigenvalues concentrated away from zero.

Empirical validation of these results was achieved by examining the scaling of the variance of the log likelihood gradient, using the simplified two-design ansatz \cite{cerezo_cost_2021} for the PQC-based policy. Furthermore, the effect of the observables' global nature on a PQC-based agent's trainability was explored in the context of learning to select the optimal arm in a simulated multi-armed bandit environment. The observations confirmed that for a PQC-based agent with a polynomial number of actions, the contiguous-like Born policy is capable of learning the optimal arm, unlike the parity-like policy. However, when extending beyond a polynomial number of actions, both policies were unable to learn the optimal arm, in line with our theoretical predictions.

The rest of the document is organized as follows:
Section \ref{sec: QPG} introduces the policy gradient framework in RL and the intricacies behind PQC-based policies such as gradient estimation. Section \ref{sec: pg variance} establishes novel results forming the core of this work. It provides clear lower bounds for the policy gradient's variance. Section \ref{sec: numerical experiments} resorts to numerical experiments as an empirical validation of the theoretical predictions in Section \ref{sec: pg variance}. Finally, Section \ref{sec: conclusions} concludes the work and outlines future research directions.

\section{Quantum Policy Gradients}\label{sec: QPG}

Policy Gradient algorithms are designed to optimize a parameterized policy $\pi(a|s,\theta) = \mathbb{P}\{a_t = a|s_t = s, \theta_t = \theta \}$, where $\theta \in \mathbb{R}^k$ denotes the parameter vector with dimension $k$, $s$, $a$, and $t$ represent the state, action and the time step, respectively. The essence of this approach is to enable optimal action selection without relying on a value function, with the primary aim of maximizing a performance measure $J(\theta)$. This is achieved by applying gradient ascent to $J(\theta)$ as follows:

\begin{equation}
\theta_{i+1} = \theta_i + \eta \nabla_{\theta_i} J(\theta_i)
\end{equation}

where $\eta$ is the learning rate. For discrete and small action spaces, a Softmax-Policy is commonly used to balance exploration and exploitation. The Monte-Carlo policy gradient, known as REINFORCE, estimates the gradient from samples across $N$ trajectories of length $T$, or the horizon, under the parameterized policy. A known limitation of REINFORCE is the high variance of its gradient estimation due to the stochastic nature of sampling trajectories. This variance can negatively affect performance in complex settings. Introducing a baseline denoted by $b(s_t)$, such as the average return, can reduce the variance without having to increase the number of samples $N$. The baseline is subtracted from the returned value to stabilize the optimization process, as shown in Equation \eqref{eq: policy gradient baseline}.

\begin{equation}
\nabla_{\theta} J(\theta) = \frac{1}{N} \sum_{i=0}^{N-1}\sum_{t=0}^{T-1} (G_t(\tau_i) - b(s_{t_i})) \nabla_{\theta} \log \pi(a_{t_i} \lvert s_{t_i} , \theta)
\label{eq: policy gradient baseline}
\end{equation}

where $G_t(\tau_i)$ is the cumulative discounted return at time step $t$ in trajectory $\tau_i$. Throughout the rest of the paper, the baseline $b(s_t)$ is considered as the average return across all trajectories.

    \begin{equation} 
    b(s_t) = {1 \over {N}} \sum_{i=0}^{N-1} G_t(\tau_i)
    \label{eq: baseline}
    \end{equation}
 
In this work, we consider PQC-generated policies i.e., policies generated from Parameterized Quantum Circuits (PQC). Specifically we consider two variants of the \textit{raw policies} proposed in the literature and redefined here for enhanced clarity: 1)The \textit{Contiguous-like Born policy} \cite{jerbi_variational_2021} and 2) \textit{Parity-like Born policy} \cite{meyer_quantum_2023}. For completion, the Softmax-based PQC policy \cite{sequeira_policy_2023,jerbi_variational_2021} is also defined but addressing its trainability is outside of the scope of this work. Let us start with the most general definition of a Born policy.\\
\vspace{0.5cm}
\subsection{Born policy}
\begin{restatable}{defi}{bp}
    \label{def: born policy}
    Let $s \in \mathcal{S}$ be a state embedded in an $n$-qubit parameterized quantum state, $\ket{\psi(s,\theta)}$, where $\theta \in \mathbb{R}^k$. The probability associated to a given action $a \in A$ is given by: 
    \begin{equation}
        \pi(a|s,\theta) = \langle P_a \rangle_{s,\theta} = \bra{\psi(s,\theta)} P_a \ket{\psi(s,\theta)}
    \end{equation}
    where $P_a = \sum_{v \in V_a} \ket{v}\bra{v}$ is the projector into partition $V_a \subseteq V$ where $V = \{v_0, v_1, \dots, v_{2^n -1} \}$ is the set of eigenstates of an observable
    \begin{equation}
        O = \sum_{i=0}^{2^n - 1} \lambda_i \ket{v_i}\bra{v_i}
    \end{equation}  
    Moreover, $\bigcup_{a \in A} V_a = V$ and $V_a \cap V_{a'} = \emptyset$, for all $a \neq a'$.
\end{restatable}
Definition \ref{def: born policy} introduces the most general definition of a Born policy. However, there could be partitions that do not take into account every eigenstate of a given observable, as described above. In these scenarios, the probability associated to a given action would not be normalized as before since $\sum_{a \in A} P_a \neq I$. Moreover, since the goal of this work is the study of cost-function dependent BPs in quantum policy gradients, different partitions $V_a$, and the associated globality of the measurement should be further clarified. 
\begin{figure}[!htb]
    \centering
    \begin{minipage}[t]{\textwidth}
    \centering
    \includegraphics[width=\textwidth]{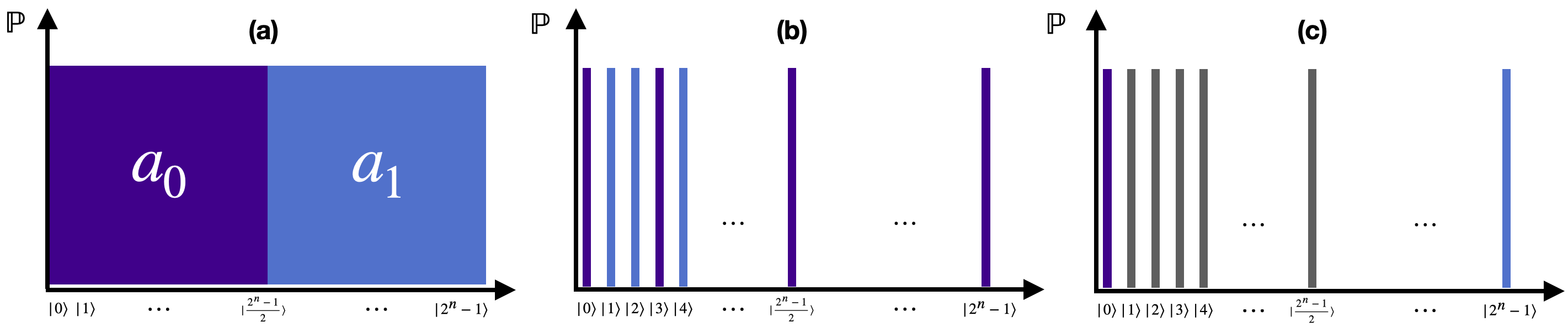}
    \end{minipage}%
    \hfill
    \caption[]{Partitions considered for the base case of $|A|=2$. (a) and (b) illustrates a Contiguous-like and parity-like partitioning, respectively, of all $2^n$ basis states. (c) Action-projector-like partitioning considering just two basis states.}
    \label{fig: partitions base case}
\end{figure}

Figure \ref{fig: partitions base case} illustrates different partitions considered throughout this work for $|A|=2$, which constitutes the base case in RL. 

\subsubsection*{Contiguous-like Born policy}
Consider the action space $A = \{ a_0, a_1\}$. The simplest partitioning that fits definition \ref{def: born policy}, would be to separate all basis states in half i.e., $V_{a_0} = \{|0\rangle , |1\rangle , |2\rangle \dots |\frac{2^n - 1}{2}\rangle\}$ and  $V_{a_1} = \{|\frac{2^n}{2}\rangle \dots |2^n - 1\rangle \}$. Such partitioning is illustrated in Figure \ref{fig: partitions base case}(a). In this case, even though every n-bit bitstring is considered to build the policy, a careful analysis of the partitioning indicates that it does not correspond to a global measurement. It is possible to assign a bitstring to its respective set by just measuring the first bit. If the bit is in state $|0\rangle$ (respectively, $|1\rangle$) it corresponds to the set $V_{a_0}$ (respectively, $V_{a_1}$). Thus, such assignment corresponds to a 1-local measurement, indeed.\\ 
In general, for an arbitrary number of actions $|A|\leq 2^n$ if we assign each bitstring to $|A|$ contiguous sets, then the measurement will actually be $(\log |A|)$-local, since to assign each bitstring $\log |A|$ bits are required to distinguish between the sets. As an example. let the the number of qubits be $n = 3$. The total number of bitstrings is $2^3 = 8$ , corresponding to the set $\{000, 001, 010, 011, 100, 101, 110, 111\}$. Suppose $|A| = 4$ with partition set $V = V_0 \cup V_1 \cup V_2 \cup V_3$. The number of bits needed to distinguish between sets is $\log_2(4) = 2$. Thus, the first 2 bits of each bitstring are considered to assign it to one of the 4 sets. Let $a$ be represented in its binary expansion. Then, the partitioning will be given by $V = \{000,001\} \cup \{010,011\} \cup \{100,101\} \cup \{110,111\}$. and the measurement will be 2-local.\\
\subsubsection*{Parity-like Born policy}
Notice that for the base case $|A|=2$, the contiguous-like Born policy loses expressivity since the measurement becomes 1-local. We can actually devise a more expressive assignment by considering a parity function, as illustrated in Figure \ref{fig: partitions base case}(b). The $2^n$ bitstrings in a $n$-qubit PQC can be considered assigning each of them by the parity of the bitstring (number of 1's). Thus, the policy is represented as:
\begin{equation}
    \pi(a|s,\theta) = \sum_{b \in \{0,1\}^n}^{\oplus b = a} \bra{\psi(s,\theta)} b \rangle \langle b \ket{\psi(s,\theta)}
\end{equation}
Such an assignment constitutes a global measurement and the authors of \cite{meyer_quantum_2023} showed that it corresponds to the assignment that maximizes the extracted information. Notice that instead of the Pauli-Z measurement on every qubit, one could instead measure either a single-qubit or an ancilla, provided a CNOT cascade prior to the measurement, as highlighted in \cite{meyer_quantum_2023}. For $|A|>2$, the authors designed a partitioning based on a recursive parity function which they conjecture to be optimal in the sense of extracted information and globality. Let $m=\log |A|$ be the number of recursive calls and $\boldsymbol{b}$ be a $n$-bit bitstring measured through sampling a PQC. Then, the partition can be defined recursively as,
\begin{equation}
    \mathcal{C}_{[a]_2}^{(m)}=\left\{\boldsymbol{b} \mid \bigoplus_{i=m}^{n-1} b_i=a_0 \wedge \boldsymbol{b} \in \mathcal{C}_{a_m \cdots a_2\left(a_1 \oplus a_0\right)}^{(m-1)}\right\}
\end{equation}
where $[a]_2 = a_m \dots a_0$ is the binary expansion of action $a$. Since for computing the parity, each of the $n$ bits is necessary, a parity-based policy will be composed of a global measurement (or $n$-local) for $|A|=2$ as base case. Thus, it will always be global independently of the number of actions.\\
\subsubsection*{Action-projector-like Born policy}
There can also be partitions that do not take into account every eigenstate of a given observable. For instance, for the base case $|A|=2$, we could assign the all-zero state to action $a_0$, $\langle P_{a_0} \rangle_{s,\theta} = |\langle 0 | \psi(s,\theta)\rangle|^2$ and the all-ones state to action $a_1$, $\langle P_{a_1} \rangle_{s,\theta} = |\langle 1 | \psi(s,\theta)\rangle|^2$, and discard all other basis states, as illustrated in Figure \ref{fig: partitions base case}(c). In this case, the probability would need to be further normalized:
\begin{equation}
    \pi(a|s,\theta) = \frac{\langle P_{a} \rangle_{s,\theta}}{\sum_{a' \in A} \langle P_{a'} \rangle_{s,\theta}}
\end{equation}
For $|A|=2^n$ it makes sense to assign each eigenstate to an action. In such case, the measurement would be $n$-local.\\
Table \ref{tab: born policies} summarizes the locality of the measurement for the different partitions considered in this work. The locality is expressed as a function of $|A|$.\\

\begin{table}[h!]
    \centering
    \begin{tabular}{|l|l|}
    \hline
    \textbf{Born policy} & \textbf{Locality of measurement} \\ \hline
    Contiguous-like & $\log |A|$ - local \\ \hline
    Parity-like & $n$-local \\ \hline
    Action-projector-like & $n$-local \\ \hline
    \end{tabular}
    \caption{Summary of the locality of the measurement for the different partitions considered in this work.}
    \label{tab: born policies}
 \end{table}

\subsection{Softmax policy}
\begin{restatable}{defi}{sm}
    \label{def: softmax policy}
    Let $s \in \mathcal{S}$ be a state embedded in an $n$-qubit parameterized quantum state, $\ket{\psi(s,\theta)}$, where $\theta \in \mathbb{R}^k$. Let $O_a$ be an arbitrary observable composed by the sum of $m$ local/global terms $O_a  = \sum_{i=0}^{m-1} \langle O_i \rangle$ representing the numerical preference of action $a \in \mathcal{A}$ and $\beta$ an hyperparameter. The probability associated to a given action $a$ for a softmax policy is given by: 
    \begin{equation}
        \pi(a | s , \theta) = \frac{e^{\beta\langle O_a \rangle_{s,\theta}}}{\sum_{a'} e^{ \beta\langle O_{a'} \rangle_{s,\theta}}}
    \end{equation}
\end{restatable}
where $\beta$ is often referred as the inverse temperature hyperparameter that is responsible for the control of the policies greediness. That is, the softmax policy allows for greater control compared to the Born policy , since $\beta$ can control the degree in which we select what we think to be the best action or explore other actions. The higher the $\beta$ the more greedy the policy is \cite{jerbi_variational_2021}.
\subsection{Gradient estimation}
The policy gradient (Equation \ref{eq: policy gradient baseline}) is in its essence classical with the exception of the log policy gradient in which the gradient w.r.t the PQC must be computed. In that regard, the log policy gradient must be expressed as the gradient of the expectation value of an observable and the parameter-shift rule \cite{schuldEvaluatingAnalyticGradients2019} can be applied to compute the gradient using quantum hardware. Let $\langle O \rangle_{\theta}$ be the parameterized expectation value of the observable $O$. The parameter-shift rule is a hardware-friendly technique to compute the partial derivative of $\langle O \rangle_{\theta}$ w.r.t $\theta$. Explicitly, it states the equality

\begin{equation}
    \frac{\partial \langle O \rangle_{\theta}}{\partial \theta_l} = \frac{1}{2 \sin \alpha} \bigl[ \langle O \rangle_{\theta+\alpha e_l} - \langle O \rangle_{\theta-\alpha e_l}\bigr]
    \label{eq: parameter-shift rule}
\end{equation}
where $e_l$ indicates that the parameter $\theta_l$ is being shifted by $\alpha$. The partial derivative can be obtained using two expectation value estimates, each requiring a number of quantum circuit evaluations. Thus, for $\theta \in \mathbb{R}^k$, the gradient can be estimated using $2k$ total quantum circuit evaluations. The gradient accuracy is maximized at $\alpha = \frac{\pi}{4}$, since $\frac{1}{\sin \alpha}$ is minimized at this point. For arbitrary functions of expectation values like the log policy gradient, the gradient can be obtained via the standard chain rule.\\
For the Born policy the chain rule gives the following expression for the log policy gradient partial derivatives
\begin{equation}
    \partial_{\theta_{l}} \log \pi(a|s,\theta) = \partial_{\theta_{l}} \log \langle P_a \rangle_{s,\theta} = \frac{\partial_{\theta_l} \langle P_a \rangle_{s,\theta}}{\langle P_a \rangle_{s,\theta}}
    \label{eq: log policy gradient expression born}
\end{equation}
which results clearly in a unbounded gradient expression. 
The full REINFORCE algorithm with PQC-based policies explored in this work is outlined in Algorithm \ref{alg: reinforce}.

\begin{algorithm}[!ht]
    \label{alg: reinforce}
    \DontPrintSemicolon
      
    \KwInput{PQC-based policy $\pi_{\theta}$ with $\theta \in \mathbb{R}^k$. Learning rate $\eta$ and horizon $T$}
    \KwOutput{Updated parameters $\theta^{*}$}
    
    \tcc{Loop until the stopping condition is met}
    \While{True}
    {
        \tcc{Generate trajectories following the policy $\pi_{\theta}$}
        \For{$i=0 \ldots N-1$}
        {
            Generate $\tau_i = \{(s_0,a_0,r_0),\dots ,(s_{T-1},a_{T-1},r_{T-1})\}$ under PQC-based policy\;
        }
        
        Compute gradient with baseline, $\nabla_{\theta} J(\theta)$ as in Equation \eqref{eq: policy gradient baseline} using the parameter-shift rule (Equation \ref{eq: parameter-shift rule})    \;
        \tcc{Update parameters via gradient ascent}
        $\theta = \theta + \eta \nabla_{\theta} J(\theta)$\;
    }
    
    \caption{PQC-based REINFORCE with Baseline}
\end{algorithm}

\section{Trainability issues in Born policies}\label{sec: pg variance}

This section presents new findings that form the cornerstone of this study, addressing the trainability of the quantum policy gradient algorithm as outlined in Algorithm \ref{alg: reinforce}. We focus on Contiguous and Parity-like PQC-based Born policies defined in Definition \ref{def: born policy}. A key aspect of this investigation is the analysis of the variance of the log policy gradient for these policies, considering the impact of the number of qubits and actions, which subsequently influences the globality of associated observables, as detailed in Table \ref{tab: born policies}. The analysis proceeds as follows:
\begin{enumerate}
    \item \textit{Analysis of Product States} (Subsection \ref{subsec: product states}): We begin with an examination of product states as an instructive case, discussing the behavior and characteristics of the log policy gradient variance in this simplified scenario.
    \item \textit{Consideration of Entangled States} (Subsection \ref{subsec: entangled states}): We extend the analysis to include entangled states, comparing and contrasting the findings with those from the product states to highlight the effects of entanglement on trainability.
    \item \textit{Unified Variance Analysis} (Subsection \ref{subsec: var as a function of |A|}): We conduct a unified analysis of variance as a function of the number of actions, providing a comprehensive overview of how the variance scales with an increasing number of actions and its implications for the trainability of PQC-based policies.
\end{enumerate}
By systematically analyzing these cases, we aim to provide a thorough understanding of the factors influencing the trainability of quantum policy gradient algorithms and offer insights into optimizing PQC-based policies for practical applications. Since the variance of the log policy partial derivative is desired, we start with a simplification of the REINFORCE policy gradient objective, expressed in Equation \eqref{eq: policy gradient baseline}, to an expression that depends only on the variance of the policy. This approach allows for an accurate study of trainability as a function of different PQC-based policies. In the following, we consider the trivial upper bound in terms of relevant quantities in reinforcement learning to rephrase the variance expression as a function of the policy.

\begin{restatable}{lem}{lemmaqpgvarianceexpectation}
    \label{lemma: qpg variance expectation}
    Let $\pi(a|s,\theta)$ be a $n$-qubit PQC-based policy with $\theta \in \mathbb{R}^k$. Let $T$ be the trajectories horizon, $R_{\text{max}}$ be the maximum reward and $\gamma$ the trajectories discount factor. Then, the policy gradient variance w.r.t variational parameters $\theta$ is upper bounded by
    
    \begin{equation}
        \mathbb{V}_{\theta}\biggl[\partial_{\theta} v_{\pi}(s) \biggr] \leq \frac{R_{max}^2 T^4}{(1-\gamma)^4} \mathbb{V}_{\theta} \bigl[\partial_{\theta} \log \pi(a|s,\theta) \bigr]
    \end{equation}
    
\end{restatable}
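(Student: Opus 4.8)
The plan is to start from the REINFORCE gradient with baseline, Equation \eqref{eq: policy gradient baseline}, and bound the variance of the per-state objective $\partial_\theta v_\pi(s)$ by crude worst-case estimates on the reinforcement-learning quantities, peeling them off one at a time until only the variance of $\partial_\theta \log\pi(a|s,\theta)$ remains. Concretely, I would first invoke the elementary fact $\mathbb{V}[XY]\le \|X\|_\infty^2\,\mathbb{V}[Y]$ (or more simply $\mathbb{V}[Z]\le \mathbb{E}[Z^2]$ together with a pointwise bound on the non-policy factor) applied to the product $(G_t(\tau)-b(s_t))\,\partial_\theta \log\pi(a_t|s_t,\theta)$ appearing in a single term of the sum. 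The key is that the advantage-like factor $G_t(\tau)-b(s_t)$ is bounded in absolute value: since each reward is at most $R_{\max}$ and the discounted return over a horizon-$T$ trajectory satisfies $|G_t(\tau)|\le \sum_{k\ge 0}\gamma^k R_{\max}\le R_{\max}/(1-\gamma)$ (and one can also use the $T$-step truncation $\le R_{\max}(1-\gamma^T)/(1-\gamma)$), the baseline being an average of such returns is bounded by the same quantity, so $|G_t(\tau)-b(s_t)|\le 2R_{\max}/(1-\gamma)$, or within the same order, $\mathcal{O}(R_{\max}/(1-\gamma))$.

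Next I would handle the double sum over the $N$ trajectories and $T$ time steps. The outer average over $N$ trajectories is an average of i.i.d.\ terms, so it can only decrease the variance (a factor $1/N\le 1$); I would simply drop it for an upper bound. The sum over $t=0,\dots,T-1$ contributes a factor that I would bound using $\mathbb{V}[\sum_{t} X_t]\le T\sum_t \mathbb{V}[X_t]$ (Cauchy--Schwarz / the fact that variance of a sum of $T$ terms is at most $T$ times the sum of variances), giving one factor of $T$, and each $\mathbb{V}[X_t]$ carries another factor of $T$ hidden in... actually more carefully: each summand, after pulling out the bounded advantage, contributes $\mathcal{O}(R_{\max}^2/(1-\gamma)^2)\,\mathbb{V}_\theta[\partial_\theta\log\pi]$, and summing $T$ of them with the $\mathbb{V}[\sum]\le T\sum\mathbb{V}$ bound yields $T^2\cdot R_{\max}^2/(1-\gamma)^2$. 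To reach the stated $T^4/(1-\gamma)^4$ one presumably also bounds the horizon-dependence of the return more loosely (e.g.\ $|G_t|\le T R_{\max}$ and separately the geometric factor), or applies the $T$-vs-$T$ double counting twice; in any case the claimed bound is deliberately loose, so matching the exact exponents $4$ and $4$ is just a matter of which crude inequalities one chains together, and I would simply absorb everything into the stated prefactor.

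The one genuine subtlety — and the step I expect to be the main obstacle — is making precise what ``$\mathbb{V}_\theta[\partial_\theta \log\pi(a|s,\theta)]$'' means on the right-hand side, i.e.\ over which randomness the final variance is taken, so that pulling it out of the double sum is legitimate. The left-hand side $\mathbb{V}_\theta[\partial_\theta v_\pi(s)]$ is a variance over the random initialization of $\theta$ (the barren-plateau-style quantity), whereas inside the sum there is also the trajectory randomness $a_t\sim\pi(\cdot|s_t,\theta)$. I would use the law of total variance, $\mathbb{V}_\theta[\cdot]=\mathbb{E}_\theta\mathbb{V}_{\tau|\theta}[\cdot]+\mathbb{V}_\theta\mathbb{E}_{\tau|\theta}[\cdot]$, bound each piece by the conditional second moment, and then note that for a fixed state $s$ (or after taking the worst state) the conditional variance of each log-policy derivative term is controlled by $\mathbb{V}_\theta[\partial_\theta\log\pi(a|s,\theta)]$ with the action-expectation folded in — this is the reason the lemma is phrased as a ``trivial upper bound in terms of relevant quantities.'' Once this reduction is in place, the remainder is the routine chaining of the boundedness of rewards and the $T$-fold sum described above, and the lemma follows.
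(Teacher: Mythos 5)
Your proposal is correct and follows essentially the same route as the paper's proof: treat the return factor as a bounded, $\theta$-independent quantity, apply the variance-of-a-sum (Cauchy--Schwarz) bound over the $N\times T$ double sum so that the $1/N^2$ cancels and a factor $T^2$ appears, and pull out $\mathbb{V}_{\theta}\bigl[\partial_{\theta}\log\pi(a|s,\theta)\bigr]$. The exponents you were unsure about arise exactly as you guessed: the paper drops the baseline and uses the loose Appendix estimate $G_t \leq R_{\max} T/(1-\gamma)^2$, so $G_t^2\,T^2$ gives the stated $R_{\max}^2 T^4/(1-\gamma)^4$, whereas your tighter $R_{\max}/(1-\gamma)$-type bound on the (baseline-subtracted) return yields an even stronger prefactor from which the stated bound follows.
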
    
\begin{proof}

\begin{alignLetter}
 \mathbb{V}_{\theta} \left[\partial_{\theta} v_{\pi}(s) \right] &= \mathbb{V}_{\theta}\left[\frac{1}{N} \sum_{i=0}^{N-1} \sum_{t=0}^{T-1} G_t(\tau_i) \partial_{\theta} \log \pi(a_t^i | s_t^i , \theta) \right] \nonumber\\
 &= \frac{1}{N^2}\mathbb{V}_{\theta}\left[\sum_{i=0}^{N-1} \sum_{t=0}^{T-1} G_t(\tau_i) \partial_{\theta} \log \pi(a_t^i | s_t^i , \theta) \right] \nonumber\\
 &\leq \frac{1}{N^2} \left(\sum_{i=0}^{N-1} \sum_{t=0}^{T-1} \sqrt{G_t^2 \mathbb{V}_{\theta} \left[\partial_{\theta} \log \pi(a_t^i | s_t^i , \theta)\right]}\right)^2\\
 &= \frac{G_t^2}{N^2} \left(\sum_{i=0}^{N-1} \sum_{t=0}^{T-1} \sqrt{\mathbb{V}_{\theta} \left[\partial_{\theta} \log \pi(a_t^i | s_t^i , \theta)\right]}\right)^2\\
 &\leq G_t^2 T^2 \mathbb{V}_{\theta} \left[\partial_{\theta} \log \pi(a | s , \theta)\right]\\
 &= \frac{R_{max}^2 T^4}{(1-\gamma)^4} \mathbb{V}_{\theta} \left[\partial_{\theta} \log \pi(a | s , \theta)\right]
\end{alignLetter}
\noindent
where: (A) Follows from the variance of the sum of random variables $\bigl(\mathbb{V} \left[\sum_{i} X_i \right] \leq \left( \sum_i \sqrt{\mathbb{V}[X_i]} \right)^2 \bigr)$. (B) Follows from variance of a constant $a$ times a random variable $X$ $(\mathbb{V}[aX] = a^2 \mathbb{V}[X])$. (C) Considers the upper bound on $N$ and $T$. (D) Considers the trivial upper bound on the return (see Appendix \ref{appendix: upper bound return}), following the independence of $\theta$.
\noindent
\end{proof}
Lemma \ref{lemma: qpg variance expectation} indicates that the variance of the log policy objective function increases with relevant quantities in RL. Specifically, the variance increases with the reachable maximum reward, the horizon and the discount factor. At this stage trainability of PQC-based agents can be evaluated through the scaling of the variance of the log policy gradient $\mathbb{V}_{\theta} \left[\partial_{\theta} \log \pi(a | s , \theta) \right]$ as a function on the number of qubits. To that end, let us start by analyzing the behavior of the gradients in the context of product states and build from there towards general entangled quantum states.\\

\subsection{The instructive case of product states}\label{subsec: product states}

    We begin by examining the straightforward scenario of a product state. In \cite{cerezo_cost_2021}, the authors explored a simple $n$-qubit parameterized model described by the unitary $V(\theta) = \bigotimes_{i=0}^{n-1} e^{-i \theta_i \sigma_x}$. They focused on the global observable $O_G = 1 - |0\rangle \langle 0|$ to prepare the all-zero state. Although this PQC corresponds to a single layer of parameterized Pauli rotations forming a separable state, it was shown to suffer from BPs. The global observable results in a cost function $C_G(\theta) = 1 - \prod_{i=0}^{n-1}\cos^2(\theta_i)$, whose variance decays exponentially with the number of qubits due to its global nature. The authors then suggested the local observable composed by individual qubit contributions $O_L = 1 - \sum_{j=0}^{n-1}\ket{0}\bra{0}_j \otimes \mathbb{I}_{\bar{j}}$ with cost function $C_L(\theta) = 1 - \frac{1}{n}\sum_{i=0}^{n-1}cos(\theta)^2$. This modification ensures that the variance of the cost function decays polynomially with the number of qubits, thus avoiding BPs. Such finding emphasizes the critical role of a well-crafted cost function.\\
    \begin{figure}[!htb]
        \centering
        \begin{minipage}[t]{\textwidth}
        \centering
        \includegraphics[width=\textwidth]{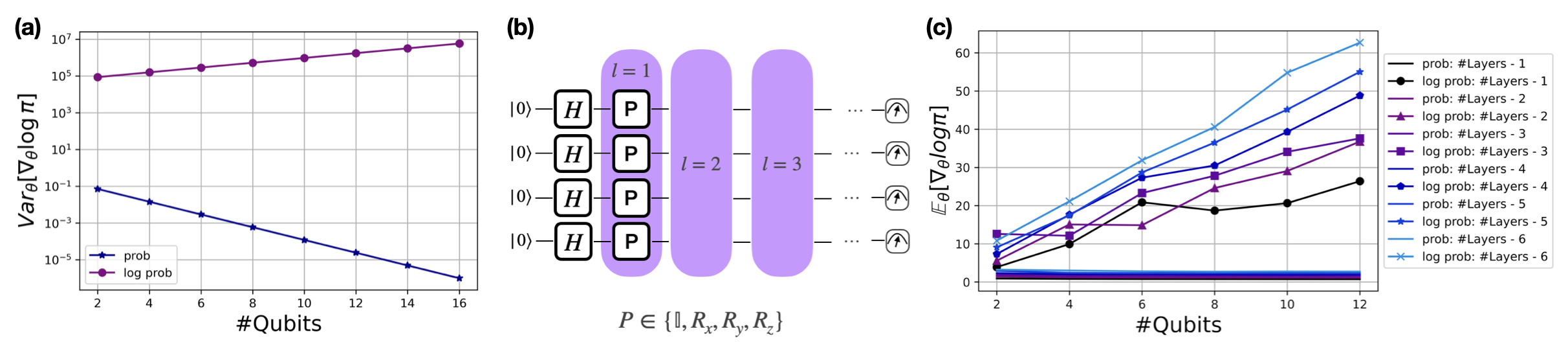}
        \end{minipage}%
        \hfill
        \caption[]{Variance and expectation value of the gradient of log-probability cost-function. \textbf{(a)} Variance for the all-zero state with parameterized state of individual qubit y-rotations. \textbf{(b)} Random product state composed of Pauli rotations sampled uniformly at random. \textbf{(c)} Expectation value for randomly sampled projectors in the random product state in (b).}
        \label{fig: instructive_product_states}
    \end{figure}
    In the broader context of machine learning, and policy gradients specifically, the log-likelihood is often preferred over direct probability as considered before. Such cost-function leads to different behavior. For an arbitrary product state $|\psi\rangle$, the probability of the all-zero state is given by:
    \begin{equation}
        |\braket{0}{\psi}|^2 = \prod_{i=0}^{n-1} |\braket{0_i}{\psi}|^2
    \end{equation}
    The decomposition into individual qubit contributions enables a product state to avoid BPs since the log likelihood cost-function separates the product into a sum of individual qubit contributions. To apply this reasoning to the REINFORCE cost function in RL, where the focus is on the log policy gradient, consider a Born policy with $|A|=2^n$ and a global projector $\ket{a}\bra{a}$ for action $a$. The policy is expressed as $\pi(a|s,\theta) = |\braket{a}{\psi(s,\theta)}|^
    2$. If the parameterized state is a product state, the probability can be decomposed into individual qubit contributions as follows:
    
    \begin{equation}
    \pi(a|s,\theta) = |\braket{a}{\psi(s,\theta)}|^2 = \prod_{i=0}^{n-1} |\braket{a_i}{\psi(s,\theta)}|^2
    \end{equation}
    where $a_i$ represents the individual qubit projector $\ket{a_i}\bra{a_i} \otimes \mathbb{I}_{\hat{i}}$ on the $i^{\text{th}}$ qubit, applying the identity operation to the other qubits. Considering the variance of the log policy gradient:

    \begin{alignLetter}
        \mathbb{V}_{\theta}\biggl[\partial_{\theta} \log \pi(a|s,\theta) \biggr] &= \mathbb{V}_{\theta}\biggl[\partial_{\theta} \log \prod_{i=0}^{n-1} |\braket{a_i}{\psi(s,\theta)}|^2 \biggr] \nonumber \\
        &= \mathbb{V}_{\theta}\biggl[\sum_{i=0}^{n-1} \partial_{\theta} \log |\braket{a_i}{\psi(s,\theta)}|^2 \biggr] \nonumber \\
        &= \sum_{i=0}^{n-1} \mathbb{V}_{\theta}\biggl[\partial_{\theta} \log |\braket{a_i}{\psi(s,\theta)}|^2 \biggr]
    \end{alignLetter}
    where (A) follows from the linearity and independence of the observables \cite{uvarovBarrenPlateausCost2021}. The variance of the log policy gradient becomes the sum of the variances of the log probabilities of each individual qubit. Notice that since we have a product state, the partial derivative would in fact not depend on the number of qubits, provided that different parameters are part of the circuit. Only in the scenarion where the parameters are shared across qubits the partial derivative will sum those terms and increase with the number of qubits, as illustrated in Figure \ref{fig: instructive_product_states}(a). Such behavior is propelled by the nature of a product state, where the probability of each individual qubit is independent of the other qubits. In Figure \ref{fig: instructive_product_states}(c) the expectation value of the partial derivative of log probability of the all-zero state is illustrated for the random product state illustrated in Figure \ref{fig: instructive_product_states}(b) where the parameters are shared per layer. That is $\theta_{i,l} = \theta_{l}$for all number of layers $l$. Indeed, the variance increases with both the number of qubits and layers, as expected.

\subsection{Generalized behavior for entangled states}\label{subsec: entangled states}
In this subsection, we analyze the variance of the log-probability for entangled states. In particular, we focus on the extreme case where $|A|=2^n$, involving global projectors similar to the product states discussed in Subsection \ref{subsec: product states}. It is known that such measurements are susceptible to barren plateaus (BPs) \cite{cerezo_cost_2021} since the probability of each basis state in this scenario depends on a subset of qubits characterized by the entangled state, derived from an $n$-qubit global projector, assuming a PQC constituted by local two-design parameterized blocks. In Figure \ref{fig: generalized_entangled_variance}\textbf{(d)} the variance of the log probability is illustrated as a function of the number of qubits for three distinct entangled quantum states: 1) Simplified 2-design ansatz illustrated in Figure \ref{fig: generalized_entangled_variance}\textbf{(a)}. 2) Strongly entangling layers, depicted in Figure \ref{fig: generalized_entangled_variance}\textbf{(b)}. 3) State generated from Pauli rotations sampled uniformly at random followed by randomly selected CZ gates, as illustrated in \ref{fig: generalized_entangled_variance}\textbf{(c)}. $n$ layers of the blocks shown in their respective figures are employed. Moreover, projectors were sampled uniformly at random from the set of $2^n$ available ones and the variance illustrated for an average of a thousand experiments.

\begin{figure}[!htb]
    \centering
    \begin{minipage}[t]{\textwidth}
    \centering
    \includegraphics[width=\textwidth]{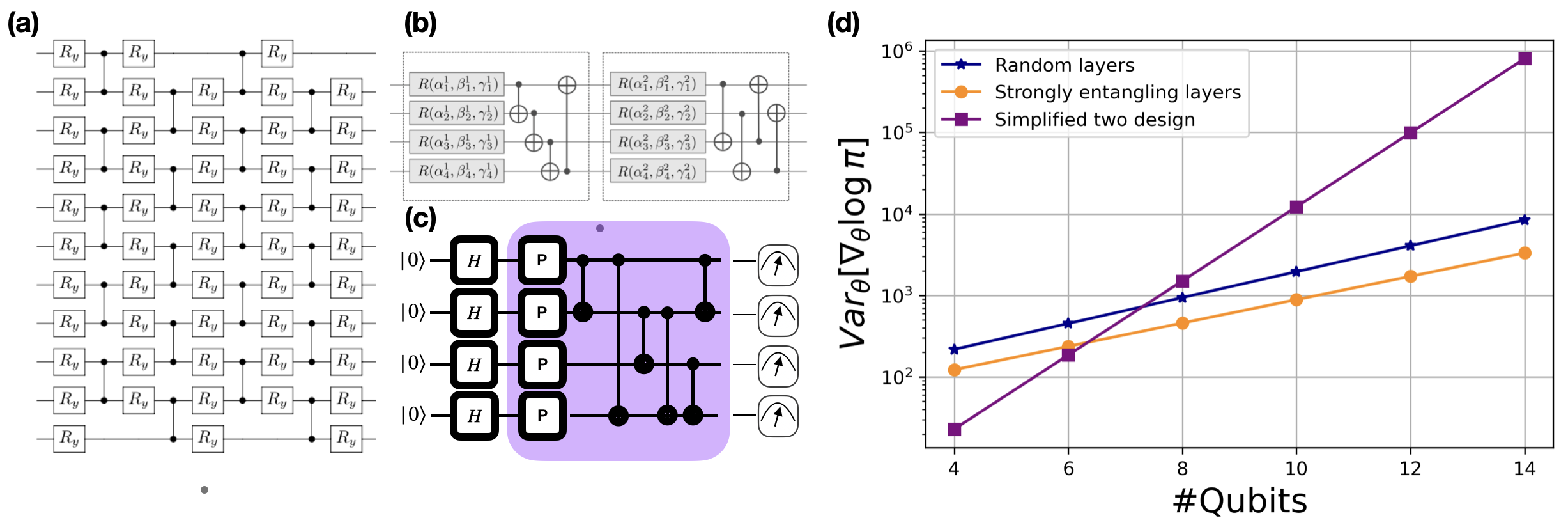}
    \end{minipage}%
    \hfill
    \caption[]{variance of the log policy gradient for three distinct entangled states. \textbf{(a)} Simplified two design. \textbf{(b)} Strongly entangling layers. \textbf{(c)} Random states composed of Pauli rotations sampled uniformly at random followed by randomly selected CZ gates. \textbf{(d)} Variance as a function of the number of qubits for the circuits \textbf{(a)}-\textbf{(c)}.}
    \label{fig: generalized_entangled_variance}
\end{figure}
From Figure \ref{fig: generalized_entangled_variance}\textbf{(d)}, it is evident that in each experiment, the variance of the log-probability increases with the number of qubits when global projectors are considered. This behavior is akin to that observed in product states. However, the variance reaches extremely high levels as a function of $n$, indicating that although these circuits avoid BP, they are prone to exploding gradients. This phenomenon arises because the probabilities diminish exponentially with an increase in the number of qubits, leading to two major issues: 1) The log-probability gradient becomes exponentially large due to the vanishing probabilities. 2) An exponentially large number of quantum circuit executions is required to accurately estimate both the probability and its gradient. As the number of qubits grows, measuring the eigenstate of interest becomes increasingly challenging due to the exponentially concentrated probabilities \cite{rudolphTrainabilityBarriersOpportunities2023}. 
However, recall that in the context of RL, we will need to do a partitioning of possibly all $2^n$ basis states into the set of available actions $|A|$. Thus, the previous observation is no longer true once the number of actions is $|A| \in \mathcal{O}(\text{poly}(n))$ since the probabilities will no longer be exponentially small . In such cases, a trainable region could be created depending on the locality of the projector, which in turn is heavily influenced by the type of Born policy implemented. In the following subsection, we examine the variance of the cost function for different Born policies as a function on the number of actions $|A|$.

\subsection{Variance as a function of $|A|$}\label{subsec: var as a function of |A|}
Let us start with an analytical upper bound for the variance of the log likelihood cost function partial derivative, presented in Lemma \ref{lemma: upper bound variance born}. Let $f(\pi_\theta) = \log \pi(a|s,\theta)$ for simplicity.

\begin{restatable}{lem}{bornpolicygradientlemma}
    \label{lemma: upper bound variance born}
    Consider a $n$-qubit Born policy $\pi(a|s,\theta)$ as in Definition \ref{def: born policy} with $|A|$ actions. Then, the upper bound for the variance of the log policy gradient is given by
    \begin{equation}
        \mathbb{V}_\theta\biggl[\partial_{\theta} \log \pi(a|s,\theta)\biggr] \leq 2\left|\partial_{\pi_\theta} f\left(\pi_\theta\right)\right|_{\infty}^2\left[\mathbb{V}_\theta\left[\partial_\theta \pi_\theta\right]+\mathbb{E}_\theta\left[\partial_\theta \pi_\theta\right]^2\right]
    \end{equation}
\end{restatable}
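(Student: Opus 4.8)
The plan is to reduce the statement to an elementary bound on the variance of a product of random variables, after invoking the chain rule for the log-policy gradient. By Equation~\eqref{eq: log policy gradient expression born} — or, more generally, by the chain rule applied to $f(\pi_\theta)=\log\pi(a|s,\theta)$ — we have $\partial_\theta f(\pi_\theta)=\partial_{\pi_\theta}f(\pi_\theta)\,\partial_\theta\pi_\theta$. Fixing $s$ and $a$ and regarding $\theta$ as the random object, the quantity to control is therefore $\mathbb{V}_\theta[UW]$ with $U:=\partial_{\pi_\theta}f(\pi_\theta)$ and $W:=\partial_\theta\pi_\theta$, where $W$ is itself the gradient of the expectation value $\langle P_a\rangle_{s,\theta}$ (computable via the parameter-shift rule) and $U$ carries the "inverse-probability'' factor $1/\pi(a|s,\theta)$ flagged as unbounded right after Equation~\eqref{eq: log policy gradient expression born}.

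The bound is then assembled in three short steps. First, apply a standard variance-of-product estimate: $\mathbb{V}_\theta[UW]\le 2\,\mathbb{E}_\theta[U^2W^2]$, obtained by writing $\mathbb{V}_\theta[X]=\tfrac12\mathbb{E}_\theta[(X-X')^2]$ for an independent copy $X'$ of $X=UW$ and using $(a-b)^2\le 2a^2+2b^2$ (this is where the constant $2$ enters; it is not tight). Second, extract $U$ in sup-norm over the parameter domain: $\mathbb{E}_\theta[U^2W^2]\le \left|\partial_{\pi_\theta}f\left(\pi_\theta\right)\right|_\infty^2\,\mathbb{E}_\theta[(\partial_\theta\pi_\theta)^2]$. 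Third, rewrite the remaining second moment as $\mathbb{E}_\theta[(\partial_\theta\pi_\theta)^2]=\mathbb{V}_\theta[\partial_\theta\pi_\theta]+\mathbb{E}_\theta[\partial_\theta\pi_\theta]^2$. Chaining the three inequalities gives exactly the claimed $\mathbb{V}_\theta[\partial_\theta\log\pi(a|s,\theta)]\le 2\left|\partial_{\pi_\theta}f\left(\pi_\theta\right)\right|_\infty^2\left[\mathbb{V}_\theta[\partial_\theta\pi_\theta]+\mathbb{E}_\theta[\partial_\theta\pi_\theta]^2\right]$.

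I do not expect a genuine obstacle: the estimate is elementary, so the only real care is bookkeeping — being explicit that $\mathbb{V}_\theta,\mathbb{E}_\theta$ are taken over the variational parameters with $(s,a)$ fixed, and that the sup-norm $\left|\partial_{\pi_\theta}f\right|_\infty$ is precisely $\sup_\theta 1/\pi(a|s,\theta)$, which may be exponentially large when the Born probability is concentrated. The value of the lemma, to be emphasised in a closing remark, is structural rather than quantitative: it cleanly \emph{decouples} the gradient-explosion source ($\left|\partial_{\pi_\theta}f\right|_\infty$, governed by how small $\langle P_a\rangle_{s,\theta}$ can get, hence by $|A|$ and the partition type) from the "intrinsic'' circuit source $\mathbb{V}_\theta[\partial_\theta\pi_\theta]$, the variance of an expectation-value gradient of a projector observable, which is exactly the object addressed by the cost-function barren-plateau results invoked later (Table~\ref{tab: born policies} locality). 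The same derivation applies verbatim to the softmax $f$, but only the Born case is needed here.
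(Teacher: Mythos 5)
Your proof is correct, and it reaches the identical final expression through the same skeleton as the paper: chain rule to write $\partial_\theta f(\pi_\theta)=\partial_{\pi_\theta}f(\pi_\theta)\,\partial_\theta\pi_\theta$, extraction of the inverse-probability factor in sup-norm, and a moment decomposition of $\partial_\theta\pi_\theta$. The one genuine difference is the intermediate inequality. The paper invokes the product-variance bound of Thanasilp et al., $\mathbb{V}[XY]\le 2\,\mathbb{V}[X]\,|Y^2|_\infty+2\,\mathbb{E}[X]^2\,\mathbb{V}[Y]$ with $X=\partial_\theta\pi_\theta$ and $Y=\partial_{\pi_\theta}f(\pi_\theta)$, and then bounds the residual $\mathbb{V}[Y]$ by $|Y|_\infty^2$ before factoring. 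You instead bypass that lemma entirely: you bound the variance by (twice) the second moment via the symmetrization identity $\mathbb{V}[X]=\tfrac12\mathbb{E}[(X-X')^2]$ and $(a-b)^2\le 2a^2+2b^2$, pull out $\left|\partial_{\pi_\theta}f(\pi_\theta)\right|_\infty^2$ by H\"older, and then expand $\mathbb{E}[(\partial_\theta\pi_\theta)^2]=\mathbb{V}[\partial_\theta\pi_\theta]+\mathbb{E}[\partial_\theta\pi_\theta]^2$. Your route is more self-contained (no external product-variance lemma) and in fact shows the constant $2$ is superfluous, since $\mathbb{V}[X]\le\mathbb{E}[X^2]$ already suffices, so you prove a slightly stronger bound; the paper's route, on the other hand, keeps the product-variance structure explicit, which is the same tool reused in its Appendix B argument. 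Your closing caveat that $\left|\partial_{\pi_\theta}f\right|_\infty=\sup_\theta 1/\pi(a|s,\theta)$ must be controlled (the clipping assumption) matches how the paper subsequently uses the lemma, so there is no gap.
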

\begin{proof}
\begin{alignLetter}
 \mathbb{V}_\theta\biggl[\partial_{\theta} \log \pi(a|s,\theta)\biggr] &= \mathbb{V}_\theta\biggl[ \partial_{\theta} f(\pi_\theta) \biggr] \nonumber\\
 &= \mathbb{V}_\theta\biggl[\partial_{\pi_\theta} f(\pi_\theta) \partial_{\theta} \pi_\theta \biggr]\\
 &\leq 2 \mathbb{V}_\theta\left[\partial_\theta \pi_\theta\right]\left|\partial_{\pi_\theta} f\left(\pi_\theta\right)\right|_{\infty}^2+2 \mathbb{E}_\theta\left[\partial_\theta \pi_\theta\right]^2 \mathbb{V}_\theta\left[\partial_{\pi_\theta} f \pi_\theta\right] \\
 &\leq 2 \mathbb{V}_\theta\left[\partial_\theta \pi_\theta\right]\left|\partial_{\pi_\theta} f\left(\pi_\theta\right)\right|_{\infty}^2+2 \mathbb{E}_\theta\left[\partial_\theta \pi_\theta\right]^2\left|\partial_{\pi_\theta} f\left(\pi_\theta\right)\right|_{\infty}^2 \\
 &=2\left|\partial_{\pi_\theta} f\left(\pi_\theta\right)\right|_{\infty}^2\left[\mathbb{V}_\theta\left[\partial_\theta \pi_\theta\right]+\mathbb{E}_\theta\left[\partial_\theta \pi_\theta\right]^2\right]
\end{alignLetter}
where (A) Follows from the chain rule; (B) Follows from variance of product of random variables $\mathbb{V} \left[ XY \right] \leq 2 \mathbb{V} \left[ X \right]|Y^2|_{\infty} + 2 \mathbb{E}\left[X\right]^2 \mathbb{V} \left[Y\right]$ as proposed in \cite[]{thanasilp_subtleties_2021}; (C) Upper bound on the variance; (D) Algebraic manipulation.
\end{proof}
The upper bound depends entirely on the total number of actions and the observable considered to estimate the policy. Let us assume a global projector on $n$-qubits and either parameterized blocks before/after parameter $\theta$ form a 1-design. That way, the average of the partial derivative ${E}_\theta\left[\partial_\theta \pi_\theta\right]=0$ \cite{cerezo_cost_2021}. 
If $|A| \in \mathcal{O}(\text{poly}(n))$, then w.l.g we can assume that $\pi_{\text{min}} \in [b,1]$ with $b \in \Omega(\frac{1}{\text{poly}(n)})$ \cite{thanasilp_subtleties_2021}. In the context of RL, the log policy gradient is only computed for sampled actions. Thus, the probability of an action cannot be zero in the gradient estimation phase. Nevertheless, it can be arbitrarily close to zero. To avoid such an issue, clipping is often considered in practice. Thus, assuming $b \in \Omega(\frac{1}{\text{poly}(n)})$ works as some sort of clipping of probabilities. In general, provided discrete action spaces, $|A| \ll 2^n$. Therefore it is only reasonable to assume the possible number of outcomes to be $|A|\sim \text{poly}(n)$. Thus, considering $b \in \Omega(\frac{1}{\text{poly(n)}})$ as proposed in \cite{thanasilp_subtleties_2021} does not pose any issues under these conditions. However, this assumption breaks down in the most general case when $|A|=2^n$, associating each basis state projector in an $n$-qubit PQC to an action, as the probabilities become exponentially small with the number of qubits. Consequently, for action spaces larger than $\text{poly}(n)$, the number of quantum circuit executions required to accurately estimate the policy becomes impractical, eventually scaling exponentially with the number of qubits.
Given that the total number of features in an RL agent's state, $s_f$, is typically large and $s_f \gg |A|$ for a discrete action space, several qubits are often required to encode the state of the agent. If the standard angle encoding scheme is employed, as seen in most literature \cite{jerbi_quantum_2022,chen_variational_2020,sequeira_policy_2023,jerbi_variational_2021}, then $n \sim s_f$, which implies that $|A| \ll 2^n$, validating the $\text{poly}(n)$ clipping assumption. 
Therefore, provided $|A| \in \mathcal{O}(\text{poly}(n))$ the absolute value of the log policy gradient $\left|\partial_{\pi_\theta} f\left(\pi_\theta\right)\right|_{\infty}^2$ falls within $\mathcal{O}(\text{poly}(n))$, with $b \in \Omega(\frac{1}{\text{poly}(n)})$.
If each parameterized block forms a local 2-design, the variance $\mathbb{V}_\theta\left[\partial_\theta \pi_\theta\right] \in \mathcal{O}(\frac{1}{\alpha^n})$ for $\alpha > 1$. Thus, in this setting the overall variance vanishes exponentially with the number of qubits.\\

However, outside of $\text{poly}(n)$ actions, the polynomial clipping assumption no longer holds. In such cases, the probability of a given action can be exponentially small, leading to a concentration with the number of qubits \cite{rudolphTrainabilityBarriersOpportunities2023}. Thus, for  $\beta > 1$, $\pi_{\text{min}} \in \Omega(\frac{1}{\beta^n})$ and $\left|\partial_{\pi_\theta} f\left(\pi_\theta\right)\right|_{\infty}^2  \in \mathcal{O}(\beta^n)$. Consequently, the variance $\mathbb{V}_\theta\biggl[\partial_{\theta} \log \pi(a|s,\theta)\biggr]$ scales as $\mathcal{O}\biggl( \biggl(\frac{\beta}{\alpha}\biggr)^n \biggr)$ and the upper bound becomes too lose since $\beta$ could actually be greater than $\alpha$ which implies that the variance increases with the number of qubits. 
Nonetheless, the behavior analyzed above corresponds exactly to the behavior observed with a parity-like Born policies, since this policy is always composed of a $n$-local measurement. Thus, for $|A| \in \text{poly}(n)$ the policy is prone to BPs. Beyond $\text{poly}(n)$ actions, the exponentially small probabilities cause a phase transition from BPs to exploding gradients. However, the need for an exponentially large number of shots to accurately estimate the policy renders it untrainable for a large number of qubits and actions.

    \label{lemma: lower bound contiguous-like}
    In stark contrast, the base case $|A|=2$ in a contiguous-like Born policy hinges on single qubit measurements which implies right from the start, a markedly different trainability profile compared to the parity-like policy. It is expected that the contiguous-like policy becomes harder to train with an increase in the number of actions since that translates directly in the globality of the employed observables. However, there exists a window of trainability that is contingent on maintaining a relatively small number of actions. Specifically, when the number of actions equals $|A|=n$, the contiguous-like policy utilizes $\log(|A|)$-local measurements. In practical terms, this equates to measuring at most $\log(n)$ adjacent qubits, a scenario that is typically local and known to circumvent BPs \cite{rudolphTrainabilityBarriersOpportunities2023}. The trainability window is described in Lemma \ref{lemma: lower bound contiguous-like}.

\begin{restatable}{lem}{bornpolicygradientlemma}
    \label{lemma: lower bound contiguous-like}
    Consider a $n$-qubit contiguous-like Born policy $\pi(a|s,\theta)$ with $|A|$ actions as in Definition \ref{def: born policy}. Then, if each block in the parameterized quantum circuit forms a local 2-design, the policy gradient variance is given by
    
\begin{equation}
    \mathbb{V}_{\theta}\biggl[\partial_{\theta} \log \pi(a|s,\theta) \biggr] \in \Omega \biggl(\frac{1}{\text{poly(n)}} \biggr)
\end{equation}
for $|A| \in \mathcal{O}(n)$ and depth $\mathcal{O}(\log(n))$. On the other hand, the policy gradient variance scales as
\begin{equation}
    \mathbb{V}_{\theta}\biggl[\partial_{\theta} \log \pi(a|s,\theta) \biggr] \in \Omega \biggl(2^{-\text{poly}(\log(n))} \biggr)
\end{equation}
for $|A| \in \mathcal{O}(n)$ and depth $\mathcal{O}(\text{poly}\log(n))$.
\end{restatable}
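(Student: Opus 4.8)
The plan is to reduce the variance of the log-policy gradient to the variance of the bare expectation value $\langle P_a \rangle_{s,\theta}$ and then invoke the cost-function-dependent barren-plateau bounds of Cerezo et al.~\cite{cerezo_cost_2021} (and the shallow-circuit extensions in~\cite{rudolphTrainabilityBarriersOpportunities2023}), exploiting that a contiguous partition produces a \emph{local} observable rather than a global one. Note that Lemma~\ref{lemma: upper bound variance born} is an upper bound and so goes the wrong way; a dedicated lower bound is needed here.

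First I would make the locality explicit. For a contiguous-like Born policy with $|A|$ actions, $\pi(a|s,\theta) = \langle P_a \rangle_{s,\theta}$, where (as explained below Definition~\ref{def: born policy}) $P_a$ projects onto $2^{n}/|A|$ contiguous basis states and therefore acts non-trivially only on the first $\log|A|$ qubits, i.e. $P_a = \hat P_a \otimes \mathbb{I}_{n-\log|A|}$. Expanding $\hat P_a$ in the Pauli basis shows $P_a$ is a $(\log|A|)$-local observable, so for $|A| \in \mathcal{O}(n)$ it is $\mathcal{O}(\log n)$-local — precisely the regime of shallow-circuit trainability analysed in~\cite{cerezo_cost_2021}.

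Next I would pass from $\partial_\theta \log \pi$ to $\partial_\theta \pi$. Under the standard parameter distribution for a local $2$-design ansatz (each rotation angle drawn uniformly, each block a $1$-design), one has $\mathbb{E}_\theta[\partial_\theta \pi] = 0$~\cite{cerezo_cost_2021}. Moreover, because $\langle P_a \rangle_{s,\theta}$ is periodic in any single parameter $\theta_l$ and, under the polynomial-clipping assumption (valid here since $|A| \in \mathcal{O}(n)$, so $\pi \in [b,1]$ with $b \in \Omega(1/\text{poly}(n))$), $\log \pi$ is smooth and periodic in $\theta_l$, the integral of $\partial_{\theta_l}\log\pi$ over one period of $\theta_l$ telescopes to zero; averaging over the remaining parameters gives $\mathbb{E}_\theta[\partial_\theta \log \pi] = 0$. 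Hence both variances equal their second moments, and using $\pi \le 1$, i.e. $\pi^{-2} \ge 1$,
\begin{equation}
\mathbb{V}_\theta\bigl[\partial_\theta \log \pi(a|s,\theta)\bigr] = \mathbb{E}_\theta\!\left[\frac{(\partial_\theta \pi)^2}{\pi^2}\right] \ge \mathbb{E}_\theta\bigl[(\partial_\theta \pi)^2\bigr] = \mathbb{V}_\theta\bigl[\partial_\theta \langle P_a \rangle_{s,\theta}\bigr].
\end{equation}

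Finally I would apply the local-cost variance lower bound of~\cite{cerezo_cost_2021,rudolphTrainabilityBarriersOpportunities2023} to the $(\log|A|)$-local observable $P_a$ with local $2$-design blocks: for $|A| \in \mathcal{O}(n)$ the observable is $\mathcal{O}(\log n)$-local, so at circuit depth $\mathcal{O}(\log n)$ that bound is $\Omega(1/\text{poly}(n))$, whereas at depth $\mathcal{O}(\text{poly}\log n)$ it degrades only to $\Omega(2^{-\text{poly}\log n})$; chaining with the displayed inequality yields the two claimed scalings. The main obstacle is the second step: the telescoping argument for $\mathbb{E}_\theta[\partial_\theta\log\pi]=0$ genuinely needs $\pi$ bounded away from zero over the period so that $\log\pi$ is integrable — this is exactly where the $b \in \Omega(1/\text{poly}(n))$ clipping enters, and without it the $1/\pi$ factor could inflate the mean and break the reduction $\mathbb{V}[\partial_\theta\log\pi] \ge \mathbb{V}[\partial_\theta\pi]$. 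The remainder is bookkeeping: tracking how the constants in the Cerezo et al. bound depend jointly on the locality $\log|A|$ and the depth, so that both the log-depth regime ($\Omega(1/\text{poly}(n))$) and the poly-log-depth regime ($\Omega(2^{-\text{poly}\log n})$) come out as stated.
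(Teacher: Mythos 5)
Your proposal is correct in substance and reaches the stated scalings, but it takes a genuinely different middle route from the proof in Appendix \ref{appendix: lower bound contiguous-like}. Both arguments share the same skeleton: identify $P_a$ as a $(\log|A|)$-local projector for a contiguous partition, use the 1-design property to get $\mathbb{E}_\theta[\partial_\theta \pi_\theta]=0$, and then import the local-cost variance lower bound of \cite{cerezo_cost_2021} for $\mathcal{O}(\log n)$-local measurements at depth $\mathcal{O}(\log n)$ (respectively $\mathcal{O}(\mathrm{poly}\log n)$). Where you differ is the reduction from $\mathbb{V}_\theta[\partial_\theta \log \pi_\theta]$ to $\mathbb{V}_\theta[\partial_\theta \pi_\theta]$: you argue $\mathbb{E}_\theta[\partial_\theta \log \pi_\theta]=0$ by a periodicity/telescoping argument and then use the pointwise bound $\pi_\theta^{-2}\geq 1$ to get $\mathbb{V}_\theta[\partial_\theta \log \pi_\theta]=\mathbb{E}_\theta[(\partial_\theta\pi_\theta)^2/\pi_\theta^2]\geq \mathbb{V}_\theta[\partial_\theta\pi_\theta]$, whereas the paper keeps the mean term and controls it (together with the $1/\pi_\theta$ factors) through a chain of expectation/variance-of-product inequalities in the spirit of \cite{thanasilp_subtleties_2021}, arriving at a lower bound of the form $\mathbb{V}_\theta[\partial_\theta\pi_\theta]^2\,(\mathbb{V}_\theta[1/\pi_\theta]^2-3\,\mathbb{E}_\theta[1/\pi_\theta]^2)$ before invoking the clipping $\pi_{\min}\in\Omega(1/\mathrm{poly}(n))$ and the scaling from \cite{cerezo_cost_2021}. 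Your route is more elementary and discards the $1/\pi_\theta^2$ enhancement (lower-bounding it by one), which suffices for the claimed $\Omega$-scalings; the paper's route keeps the $1/\pi_\theta$ dependence explicit, which is what connects this lemma to the exploding-gradient discussion elsewhere in Section \ref{subsec: var as a function of |A|}.

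One point you should make explicit rather than leave implicit: the telescoping step $\mathbb{E}_\theta[\partial_{\theta_l}\log\pi_\theta]=0$ requires (i) that $\pi_\theta$ be bounded away from zero along the whole period of $\theta_l$ (your clipping assumption, which must hold for all $\theta$, not only for sampled actions), and (ii) that the parameter measure factorizes with $\theta_l$ uniform over a full period. The hypothesis "each block forms a local 2-design" refers to the distribution of the blocks and does not by itself give you (ii); you need the standard uniform-angle sampling assumption (as used in the paper's experiments) to apply the fundamental theorem of calculus coordinate-wise. With that assumption stated, your reduction $\mathbb{V}_\theta[\partial_\theta\log\pi_\theta]\geq\mathbb{V}_\theta[\partial_\theta\pi_\theta]$ is sound and the remaining step is exactly the bookkeeping of locality $\log|A|$ versus depth in the bound of \cite{cerezo_cost_2021,rudolphTrainabilityBarriersOpportunities2023}, matching the two regimes in the lemma.
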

The detailed proof of this lemma can be found in Appendix \ref{appendix: lower bound contiguous-like}. It provides a lower bound for the variance of the policy gradient for contiguous-like Born policies, under the condition that each parameterized block in the circuit constitutes a local 2-design. For $|A| \in \mathcal{O}(n)$, the variance declines at most polynomially with the number of qubits, and the number of quantum circuit executions needed for accurate policy and gradient estimation does not grow exponentially, thanks to $\log(n)$-local measurement. Hence, the policy remains trainable up to a depth of $\mathcal{O}(\log(n))$. Thus, in general, for $|A|$ within $\mathcal{O}(\text{poly}(n))$, the variance diminishes more rapidly than polynomially but less so than exponentially with the number of qubits, due to the observables' locality being greater than $\log(n)$ but less than $n$.\\

\subsection{Analysis of the Fisher Information spectrum}\label{subsec: analysis of the Fisher Information spectrum}

In the realm of computational learning theory, the Fisher Information Matrix (FIM) is instrumental for evaluating the information garnered from a parameterized statistical model. In RL, the FIM must account for states sampled from the distribution generated under the policy, our parameterized model. Defining $\pi(a|s,\theta)$ as the parameterized policy and $d_{s}^{\pi}$ as the distribution of states under the policy, the FIM is formulated as the expected value of the outer product of the gradient of the log-likelihood function:

\begin{equation}
    \mathcal{I}(\theta) = \mathbb{E}_{s \sim d_{s}^{\pi}}\mathbb{E}_{a \sim \pi(.|s,\theta)} \biggl[ \nabla_{\theta} \log \pi(a|s,\theta) \nabla_{\theta} \log \pi(a|s,\theta)^T \biggr]   
\end{equation}
The FIM efficiently indicates how changes in parameters impact the model's output. Notably, the FIM spectrum is crucial for characterizing the barren plateau (BP) phenomenon in PQC-based statistical models employing log-likelihood loss functions \cite{abbasPowerQuantumNeural2021} which in turn can also be used to characterize BPs in the RL paradigm. However, in RL, where the loss function is influenced by cumulative rewards, the FIM does not consider this weighting. Still, assuming non-zero rewards, the FIM spectrum remains a valuable tool for BP characterization. In a BP, the eigenvalues on the FIM will be exponentially concentrated around zero \cite{abbasPowerQuantumNeural2021}. The expected value of a diagonal entry $k$ of the FIM can be written as,

\begin{alignLetter}
    \mathbb{E}_{\theta}\biggl[\mathcal{I}_{kk}(\theta)\biggr] &= \mathbb{E}_{\theta} \biggl[ \biggl( \partial_{\theta_k} \log \pi(a|s,\theta) \biggr)^2 \biggr]\nonumber \\
    &= \mathbb{V}_{\theta} \biggl[\partial_{\theta_k} \log \pi(a|s,\theta) \biggr] + \biggl(\mathbb{E}_{\theta} \biggl[\partial_{\theta_k} \log \pi(a|s,\theta) \biggr]\biggr)^2
\end{alignLetter}
where (A) is obtained from the definition of the variance. As a result, the FIM's diagonal entry can be lower bounded by the variance of the log likelihood:

\begin{equation}
    \mathbb{E}_{\theta}\biggl[\mathcal{I}_{kk}(\theta)\biggr] \geq \mathbb{V}_{\theta} \biggl[\partial_{\theta_k} \log \pi(a|s,\theta) \biggr]
    \label{eq: lower bound FIM entry}
\end{equation}
This implies that the trace of the FIM, being the sum of its eigenvalues, has a lower bound as follows: $\text{Tr}\biggr[\mathbb{E}_{\theta}[\mathcal{I}(\theta)]\biggr] \geq \sum_{k=0}^{K-1} \mathbb{V}{\theta} \biggl[\partial{\theta_k} \log \pi(a|s,\theta) \biggr]$ for $\theta \in \mathbb{R}^K$. Therefore, the lower bound presented in Lemma \ref{lemma: lower bound contiguous-like} can be directly used to identify the conditions under which the FIM spectrum indicates a BP for PQC-based policies. In a BP scenario, each entry of the FIM will vanish exponentially with the number of qubits, necessitating an exponentially increasing number of measurements to estimate each entry of the FIM accurately. Based on Lemma \ref{lemma: lower bound contiguous-like}, for a Contiguous-like Born policy with $|A| \in \mathcal{O}(\text{poly}(n))$, the variance vanishes at most polylogarithmically with the number of qubits, affecting also the eigenvalues of the FIM as indicated by Equation \eqref{eq: lower bound FIM entry}. Consequently, the FIM spectrum in such cases fails to indicate a BP. Conversely, for a Parity-like Born policy with $|A| \in \mathcal{O}(\text{poly}(n))$, the variance shrinks exponentially with the number of qubits, as do the eigenvalues of the FIM highlighting a BP. 
In situations where the number of actions exceeds $\text{poly}(n)$, not only the number of required measurements for accurate policy estimation are prohibitively large, but also the probabilities associated with actions remain exponentially small. This implies that, despite avoiding BPs, these scenarios are more likely to encounter exploding gradients rather than BPs, reflected in increasing FIM entries and a less concentrated spectrum around zero. Therefore, it is crucial to note that in cases where the number of actions is large, the FIM spectrum may not indicate BPs, but significant trainability issues can still arise from the necessity of a large number of measurements and the possibility of exploding gradients.

\subsection{Summary}

In this section, we characterized the trainability landscape of PQC-based policies. We demonstrated that under the realistic condition of a polynomial number of measurements, the variance of the log policy gradient for a contiguous-like Born policy decays at most polylogarithmically with the number of qubits, while for a parity-like Born policy, the variance vanishes exponentially with the number of qubits, provided the action space is also of polynomial size. Outside of the regime of polynomially sized action spaces, both policies becomes become untrainable since exponentially small probabilities associated with actions arise, which induce an exploding gradient on the log policy gradient objective function. Such phenomena was also captured and characterized through the inspection of the Fisher Information spectrum. A spectrum highly concentrated around zero indicates a BP. However, we highlight that in a regime where the number of actions is outside of $\mathcal{O}(\text{poly}(n))$, the FIM spectrum will not faithfully indicate a trainability problem. In such cases since probabilities are exponentially small, the FIM spectrum will be less concentrated around zero, moving away from BPs. Nevertheless, the gradient explodes at the same time as an exponentially large number of measurements is needed to accurately estimate both the policy and gradient.

\section{Numerical experiments}\label{sec: numerical experiments}

In this section, we conduct an in-depth evaluation of the trainability issues in quantum policy gradients, as posited in Lemma \ref{lemma: lower bound contiguous-like}, through experimental validation in two distinct scenarios:
\begin{itemize}
    \item \textit{Trainability issues using a simplified 2-design} - We resort to the simplified two-design ansatz, as described in \cite{cerezo_cost_2021} and depicted in Figure \ref{fig: generalized_entangled_variance}(a). This task involves examining the variance of the log likelihood partial derivative with respect to the type of policy and the range of available actions. Additionally, we investigate the FIM spectrum for both policies, focusing on how it varies with the number of actions.
    \item \textit{Multi armed bandits} - To assess the practical performance of both Born policy formulations in an RL context, we designed a synthetic multi-armed bandit environment. The objective here is to evaluate the policies' ability to discern the optimal arm through actions sampling.

\end{itemize}
For the first task, the chosen two-design ansatz, while not strictly a two-design, it has been previously demonstrated to encounter cost-function BPs \cite{cerezo_cost_2021}. This setup is particularly advantageous for simulation purposes, especially with a large number of qubits and increased depth. We opted for a depth of $\mathcal{O}(n^2)$ in our experiments.
Given the scarcity of large-scale RL problems amenable to efficient resolution via PQC-based policies, and more critically, those employing a sufficient number of qubits to investigate trainability in policy gradients, we selected the multi-armed bandit for the second task. This choice allows us to maintain the same objective function while affording the flexibility to adjust the total number of qubits, thus facilitating a thorough examination of trainability issues. It's noteworthy that all experiments were conducted using Pennylane's quantum simulator \cite{bergholmPennyLaneAutomaticDifferentiation2022}, with gradient estimations carried out via Parameter-shift rules \cite{schuldEvaluatingAnalyticGradients2019} and considering $\mathcal{O}(\text{poly}(n))$ number of measurements. For those interested in replicating our findings, our work is available on the GitHub repository at \href{https://github.com/andre-sequeira10/Trainability-issues-in-QPGs}{Trainability-issues-in-QPGs}.
\subsection{Trainability issues using a simplified 2-design}
In Section \ref{subsec: var as a function of |A|}, we meticulously examine the conditions under which trainability issues emerge in the training of learning agents using quantum policy gradients, focusing on the type of policy employed and the range of actions available. This assessment hinges on the globality of the observable, and for this investigation, we adopt the simplified two-design ansatz referenced in \cite{cerezo_cost_2021}. To align closely with a RL setting, we incorporate the initial rotation set from the ansatz, shown in purple in Figure \ref{fig: S2D encoding}, to encode the agent's state. In this artificial setup, each of the agent's $n$ state features is encoded using $n$ qubits through angle-encoding. Both the agent's state, $s$, and the trainable parameters, $\theta$, are uniformly sampled from the interval $\{s,\theta\} \sim U(-\pi,\pi)$. 

\begin{figure}[!htb]
    \centering
    \begin{minipage}[t]{\textwidth}
    \centering
    \includegraphics[height=0.20\textheight,width=0.2\linewidth]{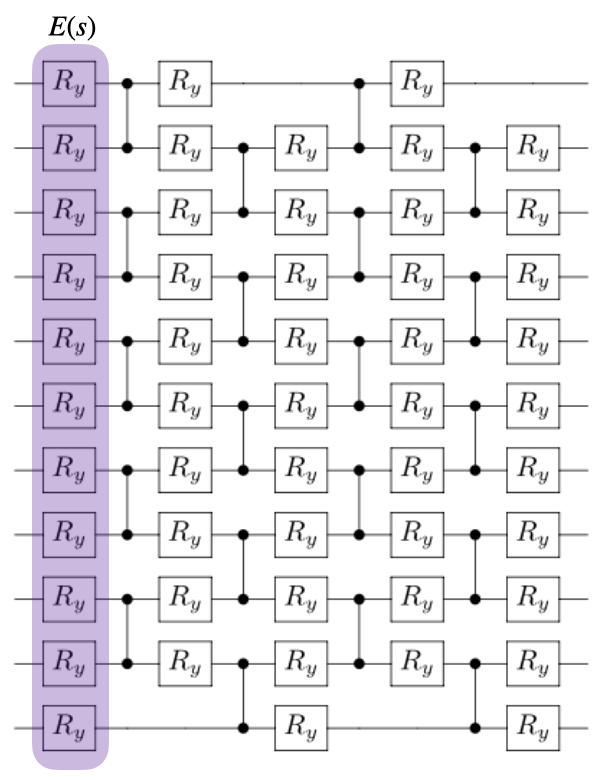}
    \end{minipage}%
    \hfill
    \caption[]{Simplified two-design ansatz with the first set of rotations in purple representing the unitary $E(s)$ responsible for the encoding of the state of the agent, $s$.}
    \label{fig: S2D encoding}
\end{figure}
Mimicking an RL environment, we execute the Parameterized Quantum Circuit (PQC) a polynomial number of times to construct the policy from the measurements. An action is then sampled from the resulting policy distribution, and the probability of the selected action is used to estimate the variance of the log likelihood's partial derivatives. This variance is evaluated as a function of the number of actions and qubits, with the number of qubits, $n$, considered within the finite set $n = \{4,6,8,10,12,14\}$. The action set, $A$, varies as a function of the number of qubits, defined as $A = \{ 2^i | i \in \mathbb{Z}\ ,1 \leq i \leq n\}$.
Considering the assumption that the minimum probability is bounded polynomially with $n$, a scenario typical in RL where actions are sampled from the policy's probability distribution, the probability cannot be zero but may approach it closely. For this reason, $\pi_{\text{min}} \in \Omega(\frac{1}{\text{poly}(n)})$ is assumed, provided the total number of actions is also polynomially large ($|A| \in \mathcal{O}(\text{poly}(n))$). This assumption acts as a practical clipping of probabilities, as often employed in RL contexts \cite{thanasilp_subtleties_2021}. However, as the number of qubits increases, probabilities become exponentially small, requiring an analysis of the log-likelihood cost function's variance under both clipped and unclipped probability scenarios. With probability clipping, $\pi_{\text{min}}$ is set to be within $\Omega(\frac{1}{n^2})$. Each experiment is repeated and averaged over 2000 iterations with a randomly generated parameter set. 
Regarding the FIM spectrum analysis, the same PQC is used under similar conditions. The only difference is in the number of executions, reduced to 10 to manage the computational cost of computing the FIM.
\subsubsection*{Contiguous-like Born policy}
Firstly, we explore the variance of the log-likelihood cost function for the contiguous-like Born policy. Figures \ref{fig: variances_contiguous-like_clamp} and \ref{fig: variances_contiguous-like_no_clamp} display this variance as a function of the number of actions, considering both clipped and unclipped probabilities. 
\begin{figure}[!htb]
    \centering
    \begin{minipage}[t]{\textwidth}
    \centering
    \includegraphics[height=0.20\textheight,width=\linewidth]{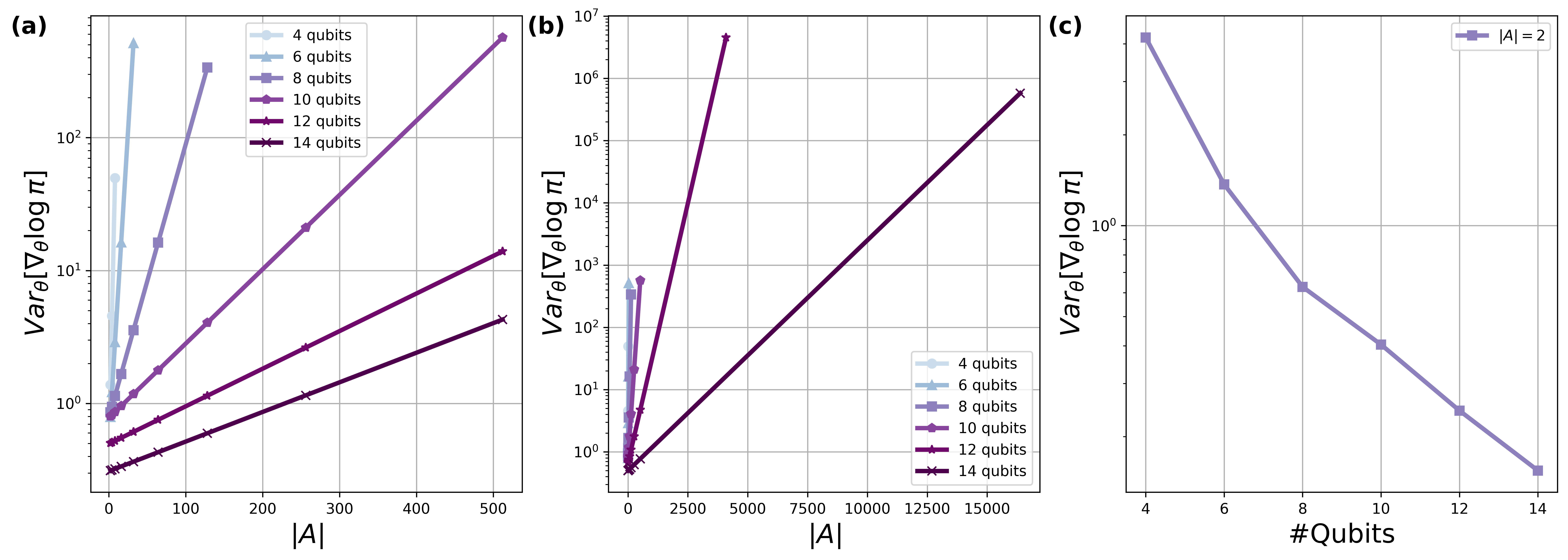}
    \end{minipage}%
    \hfill
    \caption[]{variance of the log policy gradient for contiguous-like Born policies: (a) and (b) as a function of $|A|$ and (c) semi-logarithmic plot for varying number of qubits.}
    \label{fig: variances_contiguous-like_no_clamp}
\end{figure}
In the unclipped probability scenario (Figure \ref{fig: variances_contiguous-like_no_clamp}(a)), the variance increases with the number of actions due to the $\log(|A|)$-local nature of the contiguous-like Born policy. However, despite the local nature of the measurements, as we increase the number of actions, probabilities diminish, leading to an escalation in the gradient. Figure \ref{fig: variances_contiguous-like_no_clamp}(c) reveals a semi-logarithmic plot for the variance illustrating a polynomial decay in variance with the number of qubits, confirming Lemma \ref{lemma: lower bound contiguous-like}'s predictions. Additionally, Figure \ref{fig: variances_contiguous-like_no_clamp}(b) demonstrates that the variance surges dramatically with an increase in actions, signaling the onset of exploding gradient phenomena. With a large number of qubits and actions, the gradients become unmanageable since the probabilities become exponentially small and thus a polynomial number of measurements will not be sufficient to perform learning. For that matter, consider a polynomial clipping of probabilities, illustrated in Figure \ref{fig: variances_contiguous-like_clamp}.
\begin{figure}[!htb]
    \centering
    \begin{minipage}[t]{\textwidth}
    \centering
    \includegraphics[height=0.20\textheight,width=\linewidth]{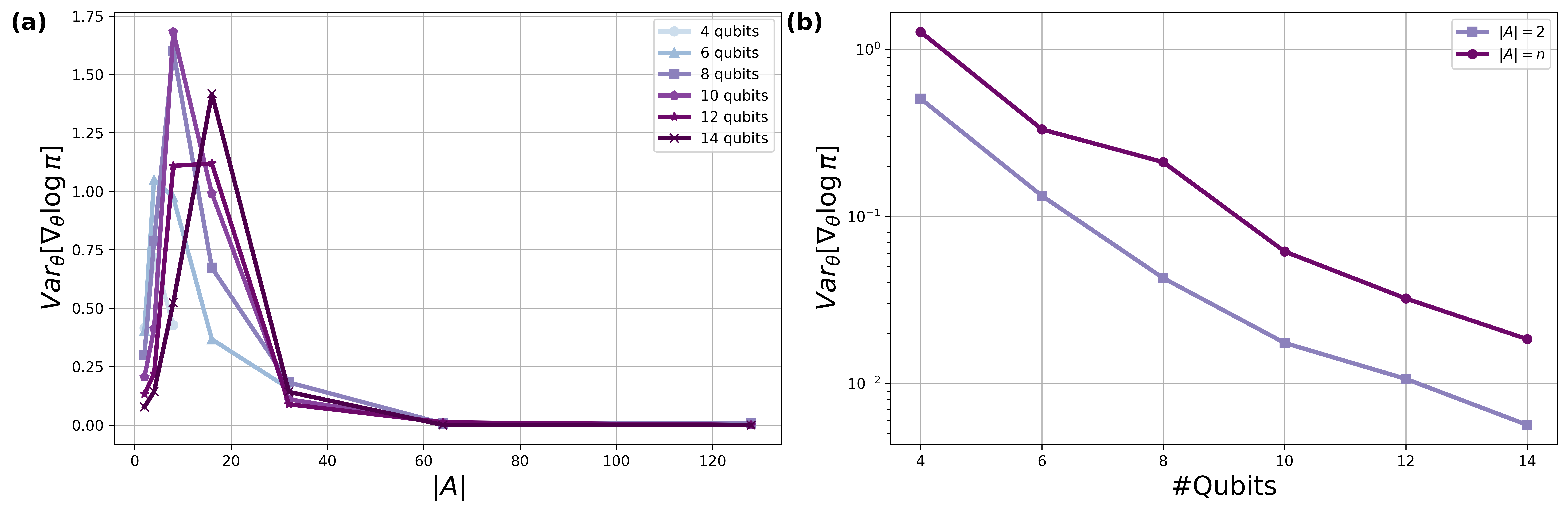}
    \end{minipage}%
    \hfill
    \caption[]{variance of the log policy gradient for contiguous-like Born policies as (a) a function of $|A|$ and (b) for varying number of qubits with polynomial clipping.}
    \label{fig: variances_contiguous-like_clamp}
\end{figure}
In this setting, it can be observed from Figure \ref{fig: variances_contiguous-like_clamp}(a) a different behavior for the scaling of the variance as a function of the number of actions, compared to the originally non-clipped probabilities. Firstly, for a small number of actions, the variance increases in the same way as before. However, at some point the variance starts decreasing heavily. Such behavior can be explained from the polynomial clipping considered. Recall that in this experimental setup $\pi_{\text{min}} \in \Omega(\frac{1}{\text{poly}(n)})$, which means that once the the number of actions increases, the probabilities decrease and eventually at some point every of the probability vector will have the same value which corresponds to the clipping. That is the reason for the decrease in variance with the number of actions since changing the parameters leads eventually to no further change in the output probabilities. 
With respect to the variance as a function of the number of qubits, it was predicted in Lemma \ref{lemma: lower bound contiguous-like} that for a polynomial number of actions $|A| \in \mathcal{O}(\text{poly(n)})$ the variance should decay at most polylogarithmically with the number of qubits. Moreover, if the number of actions $|A|=n$ this correponds exactly to measuring $\log(n)$ qubits and thus it was predicted that the variance should also decrease at most polynomially with the number of qubits. Such prediction is indeed confirmed through the scaling of the variance presented in the semi-logarithmic plot of Figure \ref{fig: variances_contiguous-like_clamp}(b) which illustrates the variance for a polynomially fixed number of actions.

\subsubsection*{Parity-like Born policy}
Let us now do a similar analysis for the case dealing with a parity-like Born policy. Recall that in this setting, the policy is obtained always from a global measurement. Thus, it was predicted in Section \ref{sec: pg variance} that the upper bound for the variance already would be exponentially vanishing providing the number of actions is at most polynomial $|A| \in \mathcal{O}(\text{poly(n)})$ proving that the Parity-like policy indeed suffers from the BP phenomenon. Such conclusion came from the fact that probabilities can also be considered as polynomially vanishing with the number of qubits. Outside a polynomial number of actions, the probabilities would be too small for the polynomial assumption to be valid and thus the upper bound would lose its meaning. As predicted, the variance for this policy, illustrated in Figure \ref{fig: variances_parity-like_no_clamp}(a) for non-clipped probabilities, increases with the number of actions. 
\begin{figure}[!htb]
    \centering
    \begin{minipage}[t]{\textwidth}
    \centering
    \includegraphics[height=0.20\textheight,width=\linewidth]{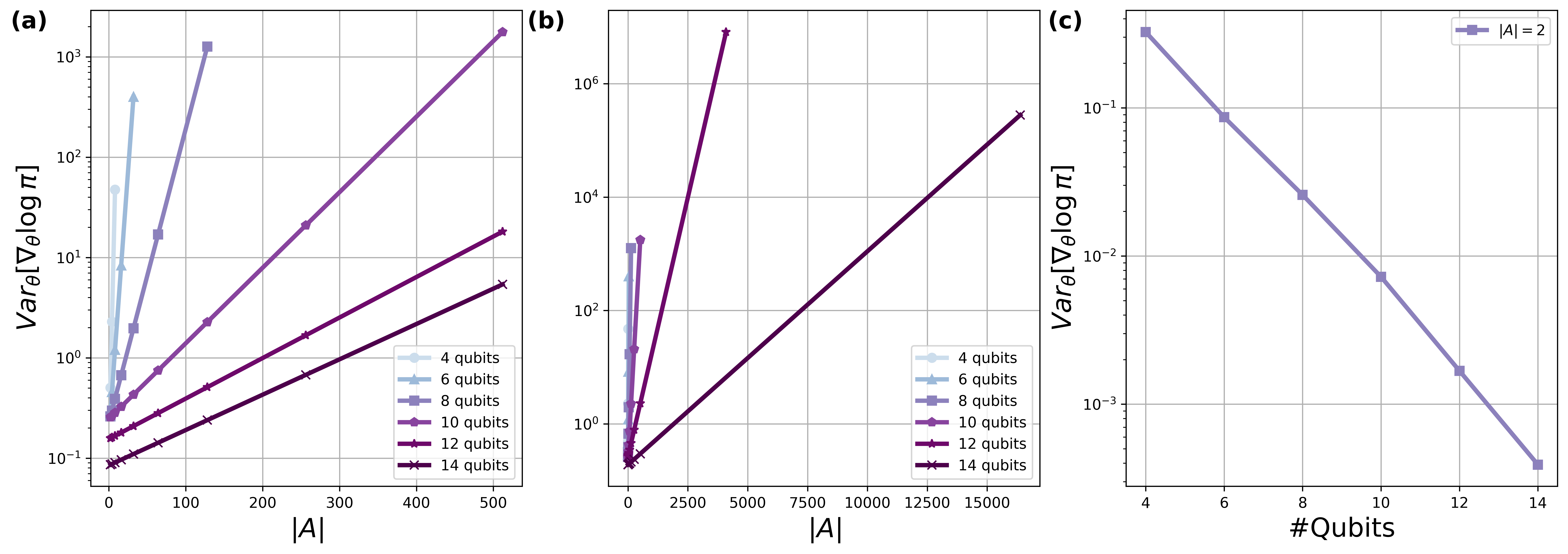}
    \end{minipage}%
    \hfill
    \caption[]{variance of the log policy gradient for parity-like Born policies: (a) and (b) as a function of $|A|$ and (c) in a semi-logarithmic plot for varying number of qubits.}
    \label{fig: variances_parity-like_no_clamp}
\end{figure}
Furthermore, Figure \ref{fig: variances_parity-like_no_clamp}(c) shows that for the base case of $|A|=2$, illustrated in semi-logarithmic plot, that the variance diminishes exponentially with the number of qubits, confirming the susceptibility of the parity-like Born policy to BPs even with a minimal number of actions. However, as depicted in Figure \ref{fig: variances_parity-like_no_clamp}(b), the variance escalates for a fixed number of actions, deviating from a vanishing trend with an increase in qubit numbers, thus hinting at exploding gradients rather than BPs. Let us now analyze the scenario in which the probabilities are clipped.
\begin{figure}[!htb]
    \centering
    \begin{minipage}[t]{\textwidth}
    \centering
    \includegraphics[height=0.20\textheight,width=\linewidth]{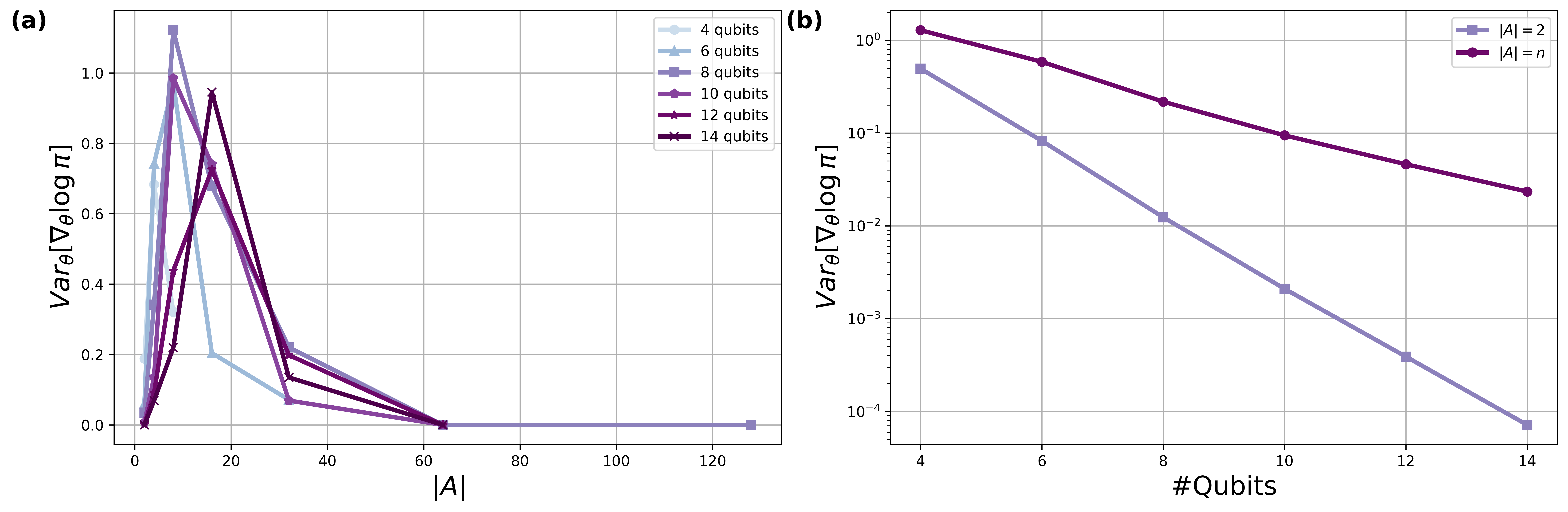}
    \end{minipage}%
    \hfill
    \caption[]{variance of the log policy gradient for parity-like Born policies as (a) a function of $|A|$ and (b) for varying number of qubits with polynomial clipping.}
    \label{fig: variances_parity-like_clamp}
\end{figure}
In Figure \ref{fig: variances_parity-like_clamp}(a), we observe a similar trend in variance as a function of the number of actions for parity-like policies with polynomially clipped probabilities. The variance initially increases with the number of actions but then undergoes a sharp decline, due to the equalization of probability values due to clipping. This behavior aligns with the hypothesis that clipping leads to uniform probabilities at higher action counts, reducing the variance. Figure \ref{fig: variances_parity-like_clamp}(b) confirms the exponential decrease in variance with the number of qubits for polynomially large actions, consistent with the theoretical predictions made in Section \ref{sec: pg variance}.
\subsubsection*{Analysis of the FIM spectrum}

The Fisher Information Matrix (FIM) spectrum analysis, as discussed in Section \ref{subsec: analysis of the Fisher Information spectrum}, is crucial for identifying trainability issues in different policy types. Focusing on the parity-like policy first, if this policy is prone to barren plateaus (BPs) with a polynomial number of actions, as inferred from previous sections, then the FIM entries should diminish exponentially with the number of qubits. This would confirm the presence of a BP, characterized by eigenvalues concentrated around zero. Conversely, beyond a polynomial number of actions and without probability clipping, the variance and subsequently the FIM entries would increase, leading to a less concentrated eigenvalue spectrum around zero, indicating the absence of a BP. These predicted behaviors are indeed evident in the FIM spectrum for the parity-like policy, as shown in Figure \ref{fig: FIM spectrum parity-like}.

\begin{figure}[!htb]
    \centering
    \begin{minipage}[t]{\textwidth}
    \centering
    \includegraphics[height=0.20\textheight,width=\linewidth]{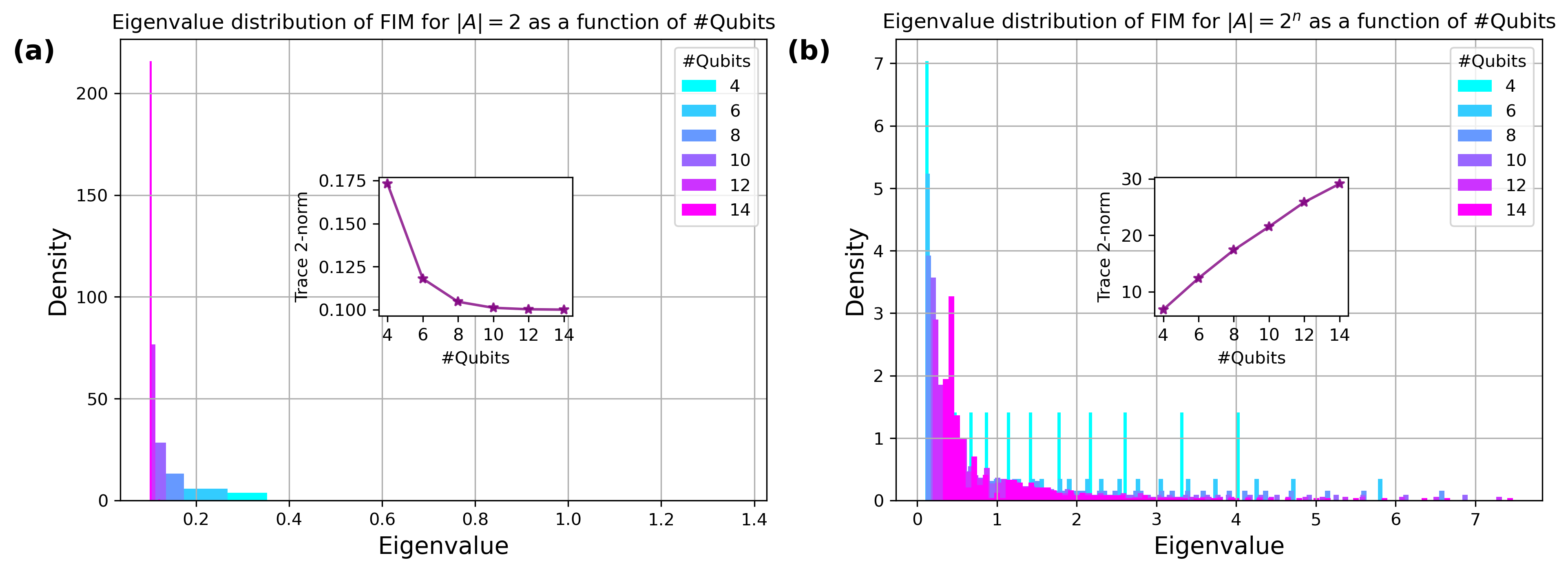}
    \end{minipage}%
    \hfill
    \caption[]{Eigenvalue distribution for the FIM as a function of the number of qubits for the parity-like Born policy in the extreme cases of $|A|=2$ and $|A|=2^n$.}
    \label{fig: FIM spectrum parity-like}
\end{figure}
In Figure \ref{fig: FIM spectrum parity-like}(a) it can be observed that for $|A|=2$ eigenvalues concentrate in zero as the number of qubits increase, confirming the presence of a BP. Moreover, in Figure \ref{fig: FIM spectrum parity-like}(b) it can be observed the opposite behavior for $|A|=2^n$. The eigenvalues are moving away from zero as the number of qubits increase, thus confirming that outside a polynomial number of actions, since the variance increases with the number of qubits the FIM spectrum does not indicate the presence of a BP.\\
Let us now inspect the spectrum of the FIM for the contiguous-like policy.
\begin{figure}[!htb]
    \centering
    \begin{minipage}[t]{\textwidth}
    \centering
    \includegraphics[height=0.20\textheight,width=\linewidth]{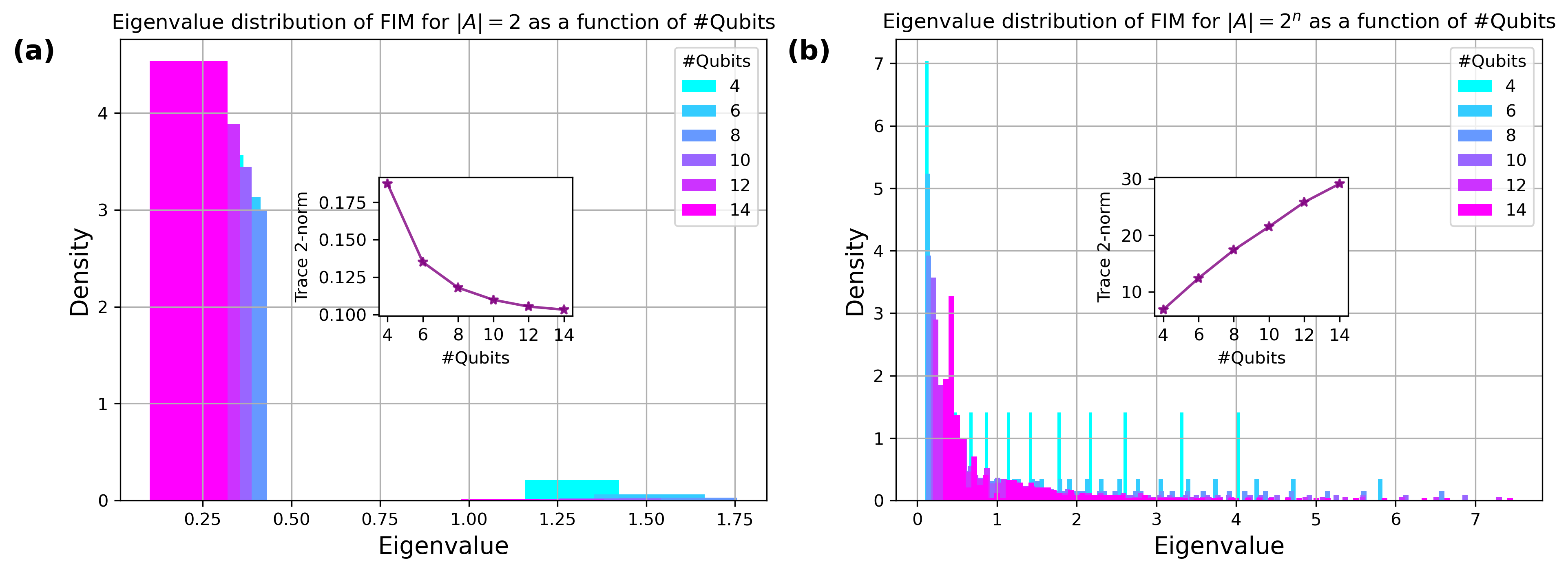}
    \end{minipage}%
    \hfill
    \caption[]{Eigenvalue distribution for the FIM as a function of the number of qubits for the contiguous-like Born policy in the extreme cases of $|A|=2$ and $|A|=2^n$.}
    \label{fig: FIM spectrum contiguous-like}
\end{figure}
The contiguous-like policy exhibits a distinct variance trend compared to the parity-like policy, especially for a smaller number of polynomial actions. Unlike the parity-like policy, this variance decreases polynomially with an increase in the number of qubits, suggesting the absence of a BP. Consequently, the FIM spectrum is expected to show eigenvalues concentrating around zero as the number of qubits increases, albeit more gradually than in the parity-like policy. However, beyond polynomial action numbers, similar to the parity-like policy, the eigenvalues will deviate from zero as the number of qubits increases due to the rising variance. Figure \ref{fig: FIM spectrum contiguous-like} corroborates these predictions. Particularly, Figure \ref{fig: FIM spectrum contiguous-like}(a) illustrates the eigenvalue concentration around zero for increasing qubits, but with a noticeably lower density compared to the parity-like case shown in Figure \ref{fig: FIM spectrum parity-like}(a).
\subsection{Multi armed bandits}
This section discusses trainability in the context of PQC-based policies, particularly what quantum systems with a substantial number of qubits, and evaluates their effectiveness in learning optimal actions in a reward-based context akin to Reinforcement Learning (RL). We consider a multi-armed bandit environment featuring a simple linear reward function: for each arm $a$, the deterministic reward $R(a)$ is given by $R(a) = 2a$. This setup allows us to scrutinize the learning capabilities of PQC-based policies with respect to the number of available actions. The PQC-based policy architecture we examine comprises a single layer of $\sigma_z$ and $\sigma_y$ single-qubit rotations, followed by an all-to-all $CZ$ entanglement pattern. For these experiments, we use 16 qubits ($n=16$) to gauge the impact of PQC depth on trainability in the scenario we have a contiguous or a parity-like policy. We analyze two scenarios: 1) a bandit environment with $|A|=n$ arms, which results in a contiguous-like policy that is $\log(n)$-local, and 2) a bandit environment with $|A|=2^{n-4}$ arms, leading to a contiguous-like policy involving measurements over a polynomial number of qubits. The performance of contiguous-like policies considered in those scenarios is compared with the global parity-like policy with the same number of qubits. Each scenario incorporates a polynomial number of measurements. Gradient estimation is conducted using parameter-shift rules.

We assess the performance of the PQC-based agent by tracking the probability of choosing the best arm over a fixed number of episodes. An episode in this bandit environment involves performing a single action, collecting the associated reward, and using this information to update the PQC-based policy parameters via gradient-based methods. We conduct 100 episodes, comprising 100 action-steps, and average the probability of selecting the best arm over 50 different trials with randomly selected parameters. 
\begin{figure}[!htb]
    \centering
    \begin{minipage}[t]{\textwidth}
    \centering
    \includegraphics[height=0.20\textheight,width=\linewidth]{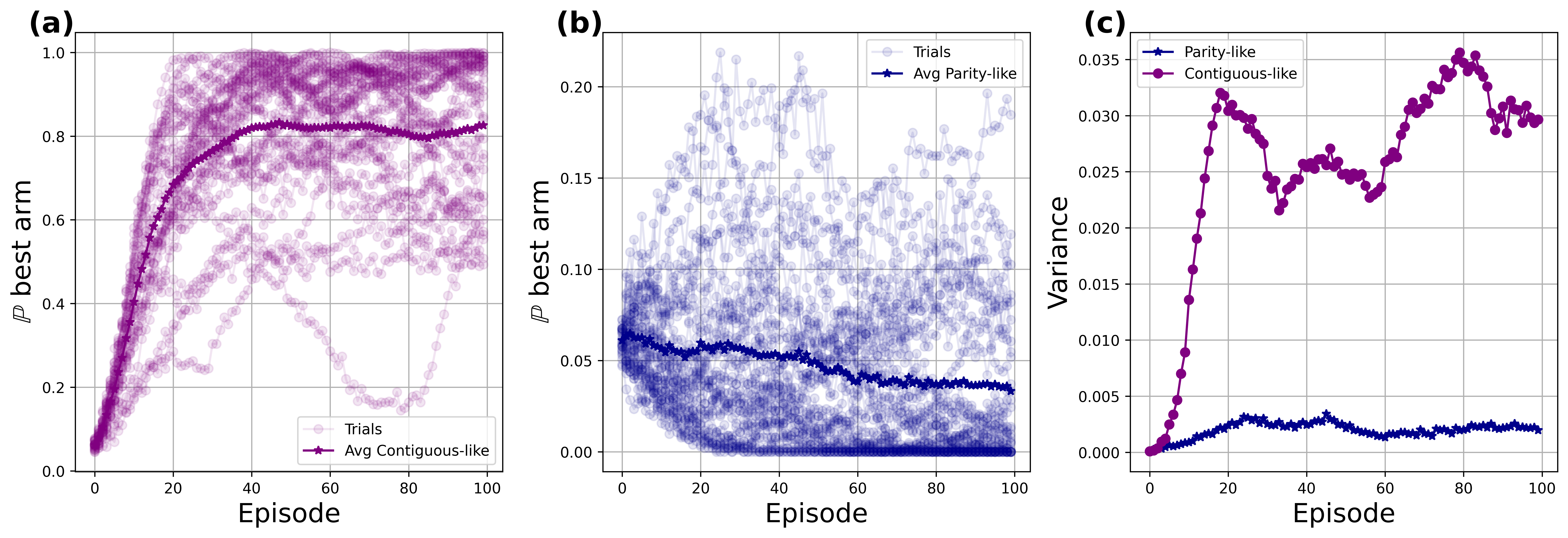}
    \end{minipage}%
    \hfill
    \caption[]{Experimental results for the Multi armed bandit environment with $|A|=n$ actions. In (a) and (b) the probability of selecting the best arm is illustrated for the contiguous-like and parity-like policies, respectively. (c) illustrates the variance of the log policy gradient for both policies.}
    \label{fig: quantum_bandits a=n}
\end{figure}
Figure \ref{fig: quantum_bandits a=n} presents the outcomes when $|A|=n$ arms. In subfigures \ref{fig: quantum_bandits a=n}(a) and \ref{fig: quantum_bandits a=n}(b), the probability of choosing the best arm is depicted for contiguous-like and parity-like policies, respectively. The contiguous-like policy generally achieves a selection probability above 0.8 for the best arm, with some instances reaching a deterministic policy (probability of 1). The parity-like policy, however, struggles to exceed a 0.5 selection probability throughout training. This disparity can be attributed to the differing localities of each policy. Since $|A|=n$, the contiguous-like policy involves $\log(n)$-local measurements, contrasting with the parity-like policy's $n$-local approach. The effect of these differing observables on trainability is further illuminated by examining the variance of the log policy gradient, as shown in Figure \ref{fig: quantum_bandits a=n}(c). While the variance remains relatively low for both policies, it is notably smaller and close to zero for the parity-like policy.
\begin{figure}[!htb]
    \centering
    \begin{minipage}[t]{\textwidth}
    \centering
    \includegraphics[height=0.20\textheight,width=\linewidth]{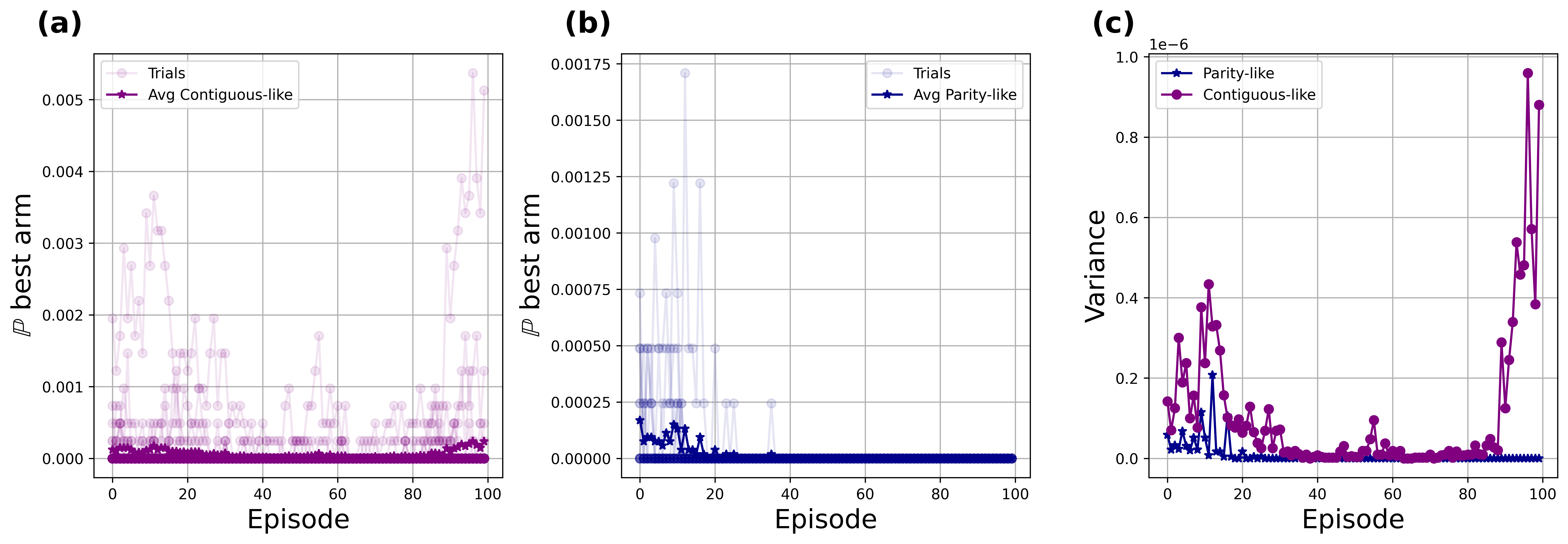}
    \end{minipage}%
    \hfill
    \caption[]{Experimental results for the Multi armed bandit environment with $|A|=2^{n-4}$ actions. In (a) and (b) the probability of selecting the best arm is illustrated for the contiguous-like and parity-like policies, respectively. (c) illustrates the variance of the log policy gradient for both policies.}
    \label{fig: quantum_bandits a=2n-4}
\end{figure}
Figure \ref{fig: quantum_bandits a=2n-4} depicts results for the bandit environment with $|A|=2^{n-4}$ arms. Subfigures \ref{fig: quantum_bandits a=2n-4}(a) and \ref{fig: quantum_bandits a=2n-4}(b) depict the probability of selecting the best arm over a series of episodes for both contiguous-like and parity-like policies. In this scenario, neither policy demonstrates the ability to learn the optimal arm effectively, with probabilities of selecting the best arm consistently below one percent. This outcome is explained by examining the employed observables. The parity-like policy is always globally measured, but now, with more arms, the probabilities associated with each arm are significantly reduced. Furthermore, the contiguous-like policy, which previously measured $\log(n)$ qubits, engages in a measurement over a polylog number of qubits. Despite not experiencing exponential decay in variance, the large number of qubits and actions places the contiguous-like policy in a barren plateau, akin to the parity-like policy. This is further evidenced by the variance of the gradient, as shown in Figure \ref{fig: quantum_bandits a=2n-4}(c). The variance for both policies is minimal, indicating an inability to learn the optimal policy. Thus, a polynomial number of measurements, as utilized in these experiments, proves insufficient for learning the optimal policy in such complex settings.

\section{Conclusion}\label{sec: conclusions}

In conclusion, our research provides pivotal insights into the trainability of PQC-based policies in the realm of policy-based RL. A significant aspect of our findings concerns the trainability challenges faced by two specific types of policies: the Contiguous-like and Parity-like Born policies. These challenges manifest in two distinct forms: the occurrence of standard BPs characterized by exponentially diminishing gradients, and the potential for gradient explosion.

The nature and extent of these challenges are influenced by two key factors: the specific type of Born policy used, and the interplay between the number of qubits and the action-space size. Notably, our study reveals that with $n$ qubits, when a polynomial number of measurements ($\mathcal{O}(\text{poly}(n))$) is considered, the Contiguous-like Born policy demonstrates trainable regions at a logarithmic depth ($\mathcal{O}(\text{log}(n))$), provided that the action-space remains within polynomial bounds ($\mathcal{O}(\text{poly}(n))$). This observation is crucial as it delineates a specific scenario where this policy remains effective and trainable. In contrast, under similar conditions, the Parity-like Born policy consistently exhibits a BP, indicating its inherent limitations in certain settings. Furthermore, we noticed a striking shift in gradient behavior for actions beyond polynomial size. In these instances, the gradients transition from diminishing to exploding, attributed to exceedingly low probabilities. This shift renders the polynomial number of measurements inadequate for differentiating actions, thus presenting a significant challenge to the practical application of these policies.

We would like to emphasize that part of the gradient behavior observed throughout this work is due to classical post-processing. Recall that in Section \ref{subsec: product states} we analyzed the variance for the trivial scenario of product states. We concluded that product states would not lead to trainability issues in policy gradient optimization. This is given by the score function, or the logarithm post-processing of the probability of basis states, that separates the product of individual qubit contributions into a sum. Moreover, we expect the classical policy gradient algorithm to suffer from the same problem. Indeed, for very small probabilities, the gradient would also explode. This is one reason advanced policy gradient algorithms such as PPO \cite{schulmanProximalPolicyOptimization2017a} are much more stable to train. The crucial difference compared with PQC-based policies stems from the fact that the probabilities derived from quantum systems will eventually concentrate given more expressivity and depth of the circuit \cite{rudolphTrainabilityBarriersOpportunities2023}, leading to other sorts of optimization problems that we do not know to be as severe in the classical setting. In addition, we would also like to stress that the trainability analysis presented in this work neglected the effect of the reward. Indeed, it was assumed the presence of a maximum reward to simplify the policy gradient REINFORCE objective expressed in Equation \eqref{eq: policy gradient baseline} to an expression dependent only on the policy. Nevertheless, in practice, there are problems such as the sparsity of the reward. For instance, environments where the reward is assigned only at a goal state and no reward in the middle. In such scenarios, small or even null rewards would lead to vanishing gradients and, indeed, hide the behavior of the quantum system. For that matter, we did not consider these cases.

In \cite{cerezoDoesProvableAbsence2024}, the authors showed that the BP phenomenon results from a curse of dimensionality and that the cases where one can impose trainability guarantees also lead to classically simulable models. We suspect that this is most likely the case for PQC-based policies since we are still considering hardware-efficient ansatze, and under the measurement conditions outlined through this work, it would fall under the same category explored in \cite{cerezoDoesProvableAbsence2024}. However, it raises the question of whether there are other types of ansatze that strike a perfect balance between trainability and non-efficient classical simulation, combined with specific PQC-based optimization that could be used to circumvent this issue.

There are several other promising directions for future work. For instance, one should address the trainability issues associated with softmax policies \cite{jerbi_variational_2021,sequeira_policy_2023}, where the freedom to measure the expectation values of $|A|$ different observables presents an intriguing avenue for exploration since more sophisticated results in the trainability of PQCs \cite{ragoneUnifiedTheoryBarren2023} can be considered.

\section*{Acknowledgements}\label{sec: acknowledgements}

This work is financed by National Funds through the Portuguese funding agency, FCT — Fundação para a Ciência e a Tecnologia, within grants LA/P/0063/2020, UI/BD/152698/2022 and project IBEX, with reference PTDC/CC1-COM/4280/2021

\appendix
\section{Upper bound on the return}\label{appendix: upper bound return}

Let $R_{max}$ be the maximum possible reward at any time step. Thus, the return is upper bounded by:

\begin{equation}
G(\tau) = \sum_{t=0}^{T-1} \gamma^t r_{t+1} \leq R_{max} \sum_{t=0}^{T-1} \gamma^t = R_{max} \frac{\gamma^T-1}{\gamma-1} 
\end{equation}
\noindent
This enables the following upper bound on the return per time step.

\begin{equation}
 \sum_{t=0}^{T-1} G_t(\tau) \leq R_{max} \sum_{t=0}^{T-1} \frac{\gamma^{T-t} - 1}{(\gamma - 1)} \leq R_{max} \frac{T}{(\gamma - 1)^2}
\label{eq: upper_bound_return}
\end{equation}

\section{Proof of Lemma \ref{lemma: lower bound contiguous-like}}\label{appendix: lower bound contiguous-like}
    \bornpolicygradientlemma*

    \begin{proof}
    
        Let us start with the expansion of the standard expression of the variance. For the sake of simplicity let $\pi_{\theta} = \pi(a|s,\theta)$ and the partial derivative $\partial_{\theta} \log \pi_{\theta} = \frac{\partial_{\theta} \pi_{\theta}}{\pi_{\theta}}$.
        \begin{alignLetter}
            \mathbb{V}_{\theta}\biggl[\partial_{\theta} \log \pi_{\theta} \biggr] &= \mathbb{E}_{\theta}\biggl[\biggl(\frac{\partial_{\theta} \pi_{\theta}}{\pi_{\theta}}\biggr)^2\biggr] - \mathbb{E}_{\theta}\biggl[\frac{\partial_{\theta} \pi_{\theta}}{\pi_{\theta}}\biggr]^2 \nonumber \\
            &\geq \mathbb{E}_{\theta}\biggl[(\partial_{\theta} \pi_{\theta})^2\biggr] \mathbb{E}_{\theta}\biggl[\frac{1}{\pi_{\theta}^2}\biggr] - \mathbb{V}_{\theta}\biggl[(\partial_{\theta} \pi_{\theta})^2\biggr] \mathbb{V}_{\theta}\biggl[\frac{1}{\pi_{\theta}^2}\biggr] - \mathbb{E}_{\theta}\biggl[\frac{\partial_{\theta} \pi_{\theta}}{\pi_{\theta}}\biggr]^2 \\
            &\geq \mathbb{E}_{\theta}\biggl[(\partial_{\theta} \pi_{\theta})^2\biggr] \mathbb{E}_{\theta}\biggl[\frac{1}{\pi_{\theta}^2}\biggr] - \mathbb{V}_{\theta}\biggl[(\partial_{\theta} \pi_{\theta})^2\biggr] \mathbb{V}_{\theta}\biggl[\frac{1}{\pi_{\theta}^2}\biggr] - \mathbb{V}_{\theta}\biggl[\partial_{\theta} \pi_{\theta}\biggr] \mathbb{V}_{\theta}\biggl[\frac{1}{\pi_{\theta}}\biggr]\\
            &= \mathbb{E}_{\theta}\biggl[(\partial_{\theta} \pi_{\theta})^2\biggr] \mathbb{E}_{\theta}\biggl[\frac{1}{\pi_{\theta}^2}\biggr] - \underbrace{\biggl( \mathbb{V}_{\theta}\biggl[(\partial_{\theta} \pi_{\theta})^2\biggr] \mathbb{V}_{\theta}\biggl[\frac{1}{\pi_{\theta}^2}\biggr] + \mathbb{V}_{\theta}\biggl[\partial_{\theta} \pi_{\theta}\biggr] \mathbb{V}_{\theta}\biggl[\frac{1}{\pi_{\theta}}\biggr] \biggr)}_{(a)} \nonumber
        \end{alignLetter}
        where (A) is obtained from the lower bound of the expectation value of the product of two non-negative random variables $\mathbb{E}_{\theta}[XY ] \geq \mathbb{E}_{\theta}[X]\mathbb{E}_{\theta}[Y] - \mathbb{V}_{\theta}[X]\mathbb{V}_{\theta}[Y]$ and (B) from the upper bound of the variance of the product of two random variables via Cauchy-Schwarz $\mathbb{V}_{\theta}[XY] \leq \sqrt{\mathbb{V}_{\theta}[X]\mathbb{V}_{\theta}[Y]}$ \cite{thanasilp_subtleties_2021}. The variance is lower bounded taking the upper bound of $(a)$ that can be simplified to:
        \begin{alignLetter}
            (a) &\leq \biggl( 2\mathbb{V}_{\theta}\biggl[ \partial_{\theta} \pi_{\theta}\biggr] \biggl| \partial_{\theta} \pi_{\theta} \biggr|_{\text{max}}^2 + 2\mathbb{E}_{\theta} \biggl[ \partial_{\theta} \pi_{\theta}\biggr]\mathbb{V}_{\theta}\biggl[ \partial_{\theta} \pi_{\theta}\biggr] \biggr)  \biggl| \frac{1}{\pi_{\theta}^2} \biggr|_{\text{max}} + \mathbb{V}_{\theta}\biggl[\partial_{\theta} \pi_{\theta}\biggr] \mathbb{V}_{\theta}\biggl[\frac{1}{\pi_{\theta}}\biggr] \\
            &\leq \frac{1}{2}\mathbb{V}_{\theta}\biggl[ \partial_{\theta} \pi_{\theta}\biggr] \biggl| \frac{1}{\pi_{\theta}^2} \biggr|_{\text{max}} + \mathbb{V}_{\theta}\biggl[\partial_{\theta} \pi_{\theta}\biggr] \mathbb{V}_{\theta}\biggl[\frac{1}{\pi_{\theta}}\biggr] \\
            &\leq \frac{3}{2}\mathbb{V}_{\theta}\biggl[ \partial_{\theta} \pi_{\theta}\biggr] \biggl| \frac{1}{\pi_{\theta}^2} \biggr|_{\text{max}} \\
        \end{alignLetter}
        where (A) is obtained from the upper bound of the variance pf the product of two random variables, (B) from the assumption that either parameterized block before/after $\theta$ forms a 1-design and thus $\mathbb{E}_{\theta} [ \partial_{\theta} \pi_{\theta}]=0$ and (C) from the upper bound on the variance $\mathbb{V}_{\theta}[\frac{1}{\pi_{\theta}}] \leq | \frac{1}{\pi_{\theta}^2}|_{\text{max}}$.

        The lower bound on the variance of the policy gradient can thus be further simplified to:
        \begin{alignLetter}
            \mathbb{V}_{\theta}\biggl[\partial_{\theta} \log \pi_{\theta} \biggr] &\geq \mathbb{E}_{\theta}\biggl[(\partial_{\theta} \pi_{\theta})^2\biggr] \mathbb{E}_{\theta}\biggl[\frac{1}{\pi_{\theta}^2}\biggr] - \frac{3}{2}\mathbb{V}_{\theta}\biggl[ \partial_{\theta} \pi_{\theta}\biggr] \biggl| \frac{1}{\pi_{\theta}^2} \biggr|_{\text{max}}\nonumber \\
            &\geq \biggl(\mathbb{E}_{\theta}\biggl[\partial_{\theta} \pi_{\theta}\biggr]^2 - \mathbb{V}_{\theta}\biggl[\partial_{\theta} \pi_{\theta}\biggr]^2 \biggr)\biggl(\mathbb{E}_{\theta}\biggl[\frac{1}{\pi_{\theta}}\biggr]^2 - \mathbb{V}_{\theta}\biggl[\frac{1}{\pi_{\theta}}\biggr]^2 \biggr) - \frac{3}{2}\mathbb{V}_{\theta}\biggl[ \partial_{\theta} \pi_{\theta}\biggr] \biggl| \frac{1}{\pi_{\theta}} \biggr|_{\text{max}}^2 \\
            &=  \mathbb{V}_{\theta}\biggl[\partial_{\theta} \pi_{\theta}\biggr]^2 \mathbb{V}_{\theta}\biggl[\frac{1}{\pi_{\theta}}\biggr]^2 - \biggl(\mathbb{V}_{\theta}\biggl[\partial_{\theta} \pi_{\theta}\biggr]^2 \mathbb{E}_{\theta}\biggl[\frac{1}{\pi_{\theta}}\biggr]^2 + \frac{3}{2}\mathbb{V}_{\theta}\biggl[ \partial_{\theta} \pi_{\theta}\biggr] \biggl| \frac{1}{\pi_{\theta}} \biggr|_{\text{max}}^2\biggr) \\
            &\geq \mathbb{V}_{\theta}\biggl[\partial_{\theta} \pi_{\theta}\biggr]^2 \mathbb{V}_{\theta}\biggl[\frac{1}{\pi_{\theta}}\biggr]^2 - 3\mathbb{V}_{\theta}\biggl[\partial_{\theta} \pi_{\theta}\biggr]^2 \mathbb{E}_{\theta}\biggl[\frac{1}{\pi_{\theta}}\biggr]^2 \\
            &= \mathbb{V}_{\theta}\biggl[\partial_{\theta} \pi_{\theta}\biggr]^2 \underbrace{\biggl(\mathbb{V}_{\theta}\biggl[\frac{1}{\pi_{\theta}}\biggr]^2 - 3\mathbb{E}_{\theta}\biggl[\frac{1}{\pi_{\theta}}\biggr]^2 \biggr)}_{(a)} \nonumber \\
        \end{alignLetter}
        where (A) is obtained from the lower bound of the expectation value of the product of two non-negative random variables, (B) from the assumption that either parameterized block before/after $\theta$ forms a 1-design and thus $\mathbb{E}_{\theta} [ \partial_{\theta} \pi_{\theta}]=0$ and reorganizing terms and (C) from the upper bound on the expectation value and joining terms.

        Since the variance is non-negative it implies that $(a) \geq 0$. Therefore the variance will be lower bounded depending on the number of actions and corresponding globality of the observable. For $|A| \in \mathcal{O}(n)$, $\mathbb{V}_{\theta}\biggl[\partial_{\theta} \pi_{\theta}\biggr]^2 \in \Omega(\frac{1}{\text{poly}(n)}^2)$ for $\mathcal{O}(\log(n))$ depth. It decays polynomially with the number of qubits since we are measuring $\log(n)$ (adjacent) qubits \cite{cerezo_cost_2021}. Moreover, $(a) \leq \text{poly}(n)^2$. Thus, the overall variance deacay at most polynomially with the number of qubits. When the number of actions $|A| \in \mathcal{O}(\text{poly}(n))$, $\mathbb{V}_{\theta}\biggl[\partial_{\theta} \pi_{\theta}\biggr]^2 \in \Omega(2^{-\text{poly}(\log(n))^2})$. It decays faster than polynomially but slower than exponentially since we are measuring $\log(\text{poly}(n))$ qubits \cite{cerezo_cost_2021}. In this case $(a) \leq 2^{\text{poly}(\log(n))^2}$ since we have $|A| \in \text{poly}(n)$. Therefore the overall variance decay at most polylogarithmically with the number of qubits. Thus, completing the proof. 
    \end{proof}

\bibliographystyle{plainnat}
\bibliography{refs,BP_QPG}

\begin{thebibliography}{23}
\providecommand{\natexlab}[1]{#1}
\providecommand{\url}[1]{\texttt{#1}}
\expandafter\ifx\csname urlstyle\endcsname\relax
  \providecommand{\doi}[1]{doi: #1}\else
  \providecommand{\doi}{doi: \begingroup \urlstyle{rm}\Url}\fi

\bibitem[Abbas et~al.(2021)Abbas, Sutter, Zoufal, Lucchi, Figalli, and Woerner]{abbasPowerQuantumNeural2021}
Amira Abbas, David Sutter, Christa Zoufal, Aur{\'e}lien Lucchi, Alessio Figalli, and Stefan Woerner.
\newblock The power of quantum neural networks.
\newblock \emph{Nature Computational Science}, 1\penalty0 (6):\penalty0 403--409, June 2021.
\newblock ISSN 2662-8457.
\newblock \doi{10.1038/s43588-021-00084-1}.

\bibitem[Arrasmith et~al.(2021)Arrasmith, Cerezo, Czarnik, Cincio, and Coles]{arrasmith_effect_2021}
Andrew Arrasmith, M.~Cerezo, Piotr Czarnik, Lukasz Cincio, and Patrick~J. Coles.
\newblock Effect of barren plateaus on gradient-free optimization.
\newblock \emph{Quantum}, 5:\penalty0 558, October 2021.
\newblock ISSN 2521-327X.
\newblock \doi{10.22331/q-2021-10-05-558}.
\newblock URL \url{http://arxiv.org/abs/2011.12245}.
\newblock arXiv:2011.12245 [quant-ph, stat].

\bibitem[Bergholm et~al.(2022)Bergholm, Izaac, Schuld, Gogolin, Ahmed, Ajith, Alam, {Alonso-Linaje}, AkashNarayanan, Asadi, Arrazola, Azad, Banning, Blank, Bromley, Cordier, Ceroni, Delgado, Di~Matteo, Dusko, Garg, Guala, Hayes, Hill, Ijaz, Isacsson, Ittah, Jahangiri, Jain, Jiang, Khandelwal, Kottmann, Lang, Lee, Loke, Lowe, McKiernan, Meyer, {Monta{\~n}ez-Barrera}, Moyard, Niu, O'Riordan, Oud, Panigrahi, Park, Polatajko, Quesada, Roberts, S{\'a}, Schoch, Shi, Shu, Sim, Singh, Strandberg, Soni, Sz{\'a}va, Thabet, {Vargas-Hern{\'a}ndez}, Vincent, Vitucci, Weber, Wierichs, Wiersema, Willmann, Wong, Zhang, and Killoran]{bergholmPennyLaneAutomaticDifferentiation2022}
Ville Bergholm, Josh Izaac, Maria Schuld, Christian Gogolin, Shahnawaz Ahmed, Vishnu Ajith, M.~Sohaib Alam, Guillermo {Alonso-Linaje}, B.~AkashNarayanan, Ali Asadi, Juan~Miguel Arrazola, Utkarsh Azad, Sam Banning, Carsten Blank, Thomas~R. Bromley, Benjamin~A. Cordier, Jack Ceroni, Alain Delgado, Olivia Di~Matteo, Amintor Dusko, Tanya Garg, Diego Guala, Anthony Hayes, Ryan Hill, Aroosa Ijaz, Theodor Isacsson, David Ittah, Soran Jahangiri, Prateek Jain, Edward Jiang, Ankit Khandelwal, Korbinian Kottmann, Robert~A. Lang, Christina Lee, Thomas Loke, Angus Lowe, Keri McKiernan, Johannes~Jakob Meyer, J.~A. {Monta{\~n}ez-Barrera}, Romain Moyard, Zeyue Niu, Lee~James O'Riordan, Steven Oud, Ashish Panigrahi, Chae-Yeun Park, Daniel Polatajko, Nicol{\'a}s Quesada, Chase Roberts, Nahum S{\'a}, Isidor Schoch, Borun Shi, Shuli Shu, Sukin Sim, Arshpreet Singh, Ingrid Strandberg, Jay Soni, Antal Sz{\'a}va, Slimane Thabet, Rodrigo~A. {Vargas-Hern{\'a}ndez}, Trevor Vincent, Nicola Vitucci, Maurice Weber, David Wierichs, Roeland Wiersema, Moritz Willmann, Vincent Wong, Shaoming Zhang, and Nathan Killoran.
\newblock {{PennyLane}}: {{Automatic}} differentiation of hybrid quantum-classical computations, July 2022.

\bibitem[Biamonte(2021)]{biamonte_universal_2021}
Jacob Biamonte.
\newblock Universal variational quantum computation.
\newblock \emph{Physical Review A}, 103\penalty0 (3):\penalty0 L030401, March 2021.
\newblock ISSN 2469-9926, 2469-9934.
\newblock \doi{10.1103/PhysRevA.103.L030401}.
\newblock URL \url{https://link.aps.org/doi/10.1103/PhysRevA.103.L030401}.

\bibitem[Cerezo et~al.(2021{\natexlab{a}})Cerezo, Arrasmith, Babbush, Benjamin, Endo, Fujii, McClean, Mitarai, Yuan, Cincio, and Coles]{cerezo_variational_2021}
M.~Cerezo, Andrew Arrasmith, Ryan Babbush, Simon~C. Benjamin, Suguru Endo, Keisuke Fujii, Jarrod~R. McClean, Kosuke Mitarai, Xiao Yuan, Lukasz Cincio, and Patrick~J. Coles.
\newblock Variational quantum algorithms.
\newblock \emph{Nature Reviews Physics}, 3\penalty0 (9):\penalty0 625--644, September 2021{\natexlab{a}}.
\newblock ISSN 2522-5820.
\newblock \doi{10.1038/s42254-021-00348-9}.
\newblock URL \url{https://www.nature.com/articles/s42254-021-00348-9}.
\newblock Number: 9 Publisher: Nature Publishing Group.

\bibitem[Cerezo et~al.(2021{\natexlab{b}})Cerezo, Sone, Volkoff, Cincio, and Coles]{cerezo_cost_2021}
M.~Cerezo, Akira Sone, Tyler Volkoff, Lukasz Cincio, and Patrick~J. Coles.
\newblock Cost {Function} {Dependent} {Barren} {Plateaus} in {Shallow} {Parametrized} {Quantum} {Circuits}.
\newblock \emph{Nature Communications}, 12\penalty0 (1):\penalty0 1791, March 2021{\natexlab{b}}.
\newblock ISSN 2041-1723.
\newblock \doi{10.1038/s41467-021-21728-w}.
\newblock URL \url{http://arxiv.org/abs/2001.00550}.
\newblock arXiv:2001.00550 [quant-ph].

\bibitem[Cerezo et~al.(2024)Cerezo, Larocca, {Garc{\'i}a-Mart{\'i}n}, Diaz, Braccia, Fontana, Rudolph, Bermejo, Ijaz, Thanasilp, Anschuetz, and Holmes]{cerezoDoesProvableAbsence2024}
M.~Cerezo, Martin Larocca, Diego {Garc{\'i}a-Mart{\'i}n}, N.~L. Diaz, Paolo Braccia, Enrico Fontana, Manuel~S. Rudolph, Pablo Bermejo, Aroosa Ijaz, Supanut Thanasilp, Eric~R. Anschuetz, and Zo{\"e} Holmes.
\newblock Does provable absence of barren plateaus imply classical simulability? {{Or}}, why we need to rethink variational quantum computing, March 2024.

\bibitem[Chen et~al.(2020)Chen, Yang, Qi, Chen, Ma, and Goan]{chen_variational_2020}
Samuel Yen-Chi Chen, Chao-Han~Huck Yang, Jun Qi, Pin-Yu Chen, Xiaoli Ma, and Hsi-Sheng Goan.
\newblock Variational quantum circuits for deep reinforcement learning.
\newblock \emph{IEEE Access}, 8:\penalty0 141007--141024, 2020.
\newblock Publisher: IEEE.

\bibitem[Cherrat et~al.(2023)Cherrat, Raj, Kerenidis, Shekhar, Wood, Dee, Chakrabarti, Chen, Herman, Hu, Minssen, Shaydulin, Sun, Yalovetzky, and Pistoia]{cherrat_quantum_2023}
El~Amine Cherrat, Snehal Raj, Iordanis Kerenidis, Abhishek Shekhar, Ben Wood, Jon Dee, Shouvanik Chakrabarti, Richard Chen, Dylan Herman, Shaohan Hu, Pierre Minssen, Ruslan Shaydulin, Yue Sun, Romina Yalovetzky, and Marco Pistoia.
\newblock Quantum {Deep} {Hedging}, March 2023.
\newblock URL \url{http://arxiv.org/abs/2303.16585}.
\newblock arXiv:2303.16585 [quant-ph, q-fin].

\bibitem[Jerbi et~al.(2021)Jerbi, Gyurik, Marshall, Briegel, and Dunjko]{jerbi_variational_2021}
Sofiene Jerbi, Casper Gyurik, Simon Marshall, Hans~J. Briegel, and Vedran Dunjko.
\newblock Variational quantum policies for reinforcement learning.
\newblock \emph{arXiv preprint arXiv:2103.05577}, 2021.

\bibitem[Jerbi et~al.(2022)Jerbi, Cornelissen, Ozols, and Dunjko]{jerbi_quantum_2022}
Sofiene Jerbi, Arjan Cornelissen, Māris Ozols, and Vedran Dunjko.
\newblock Quantum policy gradient algorithms, December 2022.
\newblock URL \url{http://arxiv.org/abs/2212.09328}.
\newblock arXiv:2212.09328 [quant-ph, stat].

\bibitem[Leone et~al.(2022)Leone, Oliviero, Cincio, and Cerezo]{leone_practical_2022}
Lorenzo Leone, Salvatore F.~E. Oliviero, Lukasz Cincio, and M.~Cerezo.
\newblock On the practical usefulness of the {Hardware} {Efficient} {Ansatz}, November 2022.
\newblock URL \url{http://arxiv.org/abs/2211.01477}.
\newblock arXiv:2211.01477 [quant-ph].

\bibitem[McClean et~al.(2018)McClean, Boixo, Smelyanskiy, Babbush, and Neven]{mcclean_barren_2018}
Jarrod~R. McClean, Sergio Boixo, Vadim~N. Smelyanskiy, Ryan Babbush, and Hartmut Neven.
\newblock Barren plateaus in quantum neural network training landscapes.
\newblock \emph{Nature Communications}, 9\penalty0 (1):\penalty0 4812, November 2018.
\newblock ISSN 2041-1723.
\newblock \doi{10.1038/s41467-018-07090-4}.
\newblock URL \url{https://www.nature.com/articles/s41467-018-07090-4}.
\newblock Number: 1 Publisher: Nature Publishing Group.

\bibitem[Meyer et~al.(2023)Meyer, Scherer, Plinge, Mutschler, and Hartmann]{meyer_quantum_2023}
Nico Meyer, Daniel~D. Scherer, Axel Plinge, Christopher Mutschler, and Michael~J. Hartmann.
\newblock Quantum {Policy} {Gradient} {Algorithm} with {Optimized} {Action} {Decoding}, May 2023.
\newblock URL \url{http://arxiv.org/abs/2212.06663}.
\newblock arXiv:2212.06663 [quant-ph].

\bibitem[Ragone et~al.(2023)Ragone, Bakalov, Sauvage, Kemper, Marrero, Larocca, and Cerezo]{ragoneUnifiedTheoryBarren2023}
Michael Ragone, Bojko~N. Bakalov, Fr{\'e}d{\'e}ric Sauvage, Alexander~F. Kemper, Carlos~Ortiz Marrero, Martin Larocca, and M.~Cerezo.
\newblock A {{Unified Theory}} of {{Barren Plateaus}} for {{Deep Parametrized Quantum Circuits}}, September 2023.

\bibitem[Rudolph et~al.(2023)Rudolph, Lerch, Thanasilp, Kiss, Vallecorsa, Grossi, and Holmes]{rudolphTrainabilityBarriersOpportunities2023}
Manuel~S. Rudolph, Sacha Lerch, Supanut Thanasilp, Oriel Kiss, Sofia Vallecorsa, Michele Grossi, and Zo{\"e} Holmes.
\newblock Trainability barriers and opportunities in quantum generative modeling, May 2023.

\bibitem[Schuld et~al.(2019)Schuld, Bergholm, Gogolin, Izaac, and Killoran]{schuldEvaluatingAnalyticGradients2019}
Maria Schuld, Ville Bergholm, Christian Gogolin, Josh Izaac, and Nathan Killoran.
\newblock Evaluating analytic gradients on quantum hardware.
\newblock \emph{Physical Review A}, 99\penalty0 (3):\penalty0 032331, March 2019.
\newblock ISSN 2469-9926, 2469-9934.
\newblock \doi{10.1103/PhysRevA.99.032331}.

\bibitem[Schulman et~al.(2017)Schulman, Wolski, Dhariwal, Radford, and Klimov]{schulmanProximalPolicyOptimization2017a}
John Schulman, Filip Wolski, Prafulla Dhariwal, Alec Radford, and Oleg Klimov.
\newblock Proximal {{Policy Optimization Algorithms}}, August 2017.

\bibitem[Sequeira et~al.(2023)Sequeira, Santos, and Barbosa]{sequeira_policy_2023}
André Sequeira, Luis~Paulo Santos, and Luis~Soares Barbosa.
\newblock Policy gradients using variational quantum circuits.
\newblock \emph{Quantum Machine Intelligence}, 5\penalty0 (1):\penalty0 18, April 2023.
\newblock ISSN 2524-4914.
\newblock \doi{10.1007/s42484-023-00101-8}.
\newblock URL \url{https://doi.org/10.1007/s42484-023-00101-8}.

\bibitem[Skolik et~al.(2022)Skolik, Jerbi, and Dunjko]{skolik_quantum_2022}
Andrea Skolik, Sofiene Jerbi, and Vedran Dunjko.
\newblock Quantum agents in the gym: a variational quantum algorithm for deep q-learning.
\newblock \emph{Quantum}, 6:\penalty0 720, 2022.
\newblock Publisher: Verein zur Förderung des Open Access Publizierens in den Quantenwissenschaften.

\bibitem[Thanasilp et~al.(2021)Thanasilp, Wang, Nghiem, Coles, and Cerezo]{thanasilp_subtleties_2021}
Supanut Thanasilp, Samson Wang, Nhat~A. Nghiem, Patrick~J. Coles, and M.~Cerezo.
\newblock Subtleties in the trainability of quantum machine learning models, October 2021.
\newblock URL \url{http://arxiv.org/abs/2110.14753}.
\newblock arXiv:2110.14753 [quant-ph, stat].

\bibitem[Uvarov and Biamonte(2021)]{uvarovBarrenPlateausCost2021}
A.~V. Uvarov and J.~D. Biamonte.
\newblock On barren plateaus and cost function locality in variational quantum algorithms.
\newblock \emph{Journal of Physics A: Mathematical and Theoretical}, 54\penalty0 (24):\penalty0 245301, May 2021.
\newblock ISSN 1751-8121.
\newblock \doi{10.1088/1751-8121/abfac7}.

\bibitem[Williams(1992)]{williams_simple_1992}
Ronald~J. Williams.
\newblock Simple statistical gradient-following algorithms for connectionist reinforcement learning.
\newblock \emph{Machine Learning}, 8\penalty0 (3):\penalty0 229--256, May 1992.
\newblock ISSN 1573-0565.
\newblock \doi{10.1007/BF00992696}.
\newblock URL \url{https://doi.org/10.1007/BF00992696}.

\end{thebibliography}

\end{document}